\newcommand\vldbdoi{XX.XX/XXX.XX}
\newcommand\vldbpages{XXX-XXX}
\newcommand\vldbvolume{14}
\newcommand\vldbissue{1}
\newcommand\vldbyear{2020}
\newcommand\vldbauthors{\authors}
\newcommand\vldbtitle{\shorttitle} 
\newcommand\vldbavailabilityurl{https://github.com/sohrabnamazinia/LangChain}
\newcommand\vldbpagestyle{plain} 
\begin{document}
\title{Personalized Top-k Set Queries Over Predicted Scores}

\author{Sohrab Namazi Nia, Subhodeep Ghosh, Senjuti Basu Roy}
\affiliation{%
  \institution{NJIT,Newark, NJ, USA}
}
\email{{sn773,sg2646, senjutib} @njit.edu}

\author{Sihem Amer-Yahia}
\affiliation{%
  \institution{CNRS, Univ. Grenoble Alpes, France}
}
\email{sihem.amer-yahia@univ-grenoble-alpes.fr}


\begin{abstract}
This work studies the applicability of expensive external oracles such as large language models in answering top-$k$ queries over predicted scores. Such scores are incurred by user-defined functions to answer personalized queries over multi-modal data. We propose a generic computational framework that handles arbitrary set-based scoring functions, as long as the functions could be decomposed into constructs, each of which sent to an oracle (in our case an LLM) to predict partial scores. At a given point in time, the framework assumes a set of responses and their partial predicted scores, and it maintains a collection of possible sets that are likely to be the true top-$k$. Since calling oracles is costly, our framework judiciously identifies the next construct, i.e., the next best question to ask the oracle so as to maximize the likelihood of identifying the true top-$k$. We present a principled probabilistic model that quantifies that likelihood. We study efficiency opportunities in designing algorithms. We run an evaluation with three large scale datasets, scoring functions, and baselines.  Experiments indicate the efficacy of our framework, as it achieves an order of magnitude improvement over baselines in requiring LLM calls while ensuring result accuracy. Scalability experiments further indicate that our framework could be used in large-scale applications.
\end{abstract}
\maketitle

\pagestyle{\vldbpagestyle}
\begingroup\small\noindent\raggedright\textbf{PVLDB Reference Format:}\\
\vldbauthors. \vldbtitle. PVLDB, \vldbvolume(\vldbissue): \vldbpages, \vldbyear.\\
\href{https://doi.org/\vldbdoi}{doi:\vldbdoi}
\endgroup
\begingroup
\renewcommand\thefootnote{}\footnote{\noindent
This work is licensed under the Creative Commons BY-NC-ND 4.0 International License. Visit \url{https://creativecommons.org/licenses/by-nc-nd/4.0/} to view a copy of this license. For any use beyond those covered by this license, obtain permission by emailing \href{mailto:info@vldb.org}{info@vldb.org}. Copyright is held by the owner/author(s). Publication rights licensed to the VLDB Endowment. \\
\raggedright Proceedings of the VLDB Endowment, Vol. \vldbvolume, No. \vldbissue\ %
ISSN 2150-8097. \\
\href{https://doi.org/\vldbdoi}{doi:\vldbdoi} \\
}\addtocounter{footnote}{-1}\endgroup

\ifdefempty{\vldbavailabilityurl}{}{
\vspace{.3cm}
\begingroup\small\noindent\raggedright\textbf{PVLDB Artifact Availability:}\\
The source code, data, and/or other artifacts have been made available at \url{\vldbavailabilityurl}.
\endgroup
}

\section{Introduction}
In several emerging applications that lie at the intersection of databases and machine learning, there is a need for efficient top-$k$ queries, where the scoring function is user-defined. For example, consider a web search query asking for "Hotels with Unique Decor in Manhattan." The task is to select the top-$k$ hotels from a database of user-reviewed entities, based on a scoring criterion related to "unique decor." Since the database may not directly include attributes such as "unique decor," determining such a score becomes challenging. By analyzing user data, this information can be inferred. External oracles/experts, like Large Language Models (LLMs), can provide these predictions, but querying LLMs is costly. \textit{In this paper, we propose a framework that intelligently selects the next best question to ask the oracle, such that the answer maximizes the likelihood of accurately identifying the true top-$k$ set. }


The main focus of query processing over ML models has been to ensure efficiency without compromising accuracy~\cite{9094012,DBLP:journals/pvldb/DingAL22,DBLP:journals/pvldb/KangEABZ17}. 
Most existing work relies on sampling to reduce the cost of oracle calls in query processing~\cite{DBLP:journals/pvldb/DingAL22,DBLP:journals/pvldb/KangGBHZ20,DBLP:conf/sigmod/GaoXAY21,10.1145/3448016.3452786}.
For instance, the work in~\cite{DBLP:journals/pvldb/DingAL22} proposes to solve the problem of minimizing  oracle usage for finding answers that meet a precision or recall target with provable statistical guarantees while achieving a precision or recall target. {\em To the best of our knowledge, no existing work has tackled the problem of answering personalized top-$k$ set based queries with arbitrary user-defined scoring functions over multi-modal data. Specifically, there is no work that explores how to minimize the number of LLM calls required to score potential top-$k$ sets, where the scoring function is decomposed into a set of constructs, and the score for each construct can be predicted by LLMs.} 

\smallskip \noindent {\bf Running Example.}
We assume a hotel database consisting of $5$ unique entities, i.e., hotels as shown in \autoref{tab:ny_hotels_relevance}. Each entity is represented using multiple items of different modalities (image/audio/text based reviews). Given a query (e.g., unique decor hotels in Manhattan), the goal is to identify a small set of $3$ ($k=3$) hotels that are most suitable.

The process leverages a user-defined set-based scoring function to identify the top-$3$ hotels. We consider scoring functions that are decomposable into constructs, such as, relevance, diversity, serendipity~\cite{div1,div2,div3,div4}, etc. Wlog, an example scoring function with $\ell=2$ constructs, would be used to compute the top-$k$ set that maximizes the sum of relevance and diversity $\mathcal{F}(s, q)= \Sigma_{e \in s} Rel(q,e)+ \Sigma_{e_i,e_j \in s} Div(e_i,e_j)$. The constructs of the scoring function in this example are unary (e.g., relevance) and binary (e.g., diversity). However, we must support constructs of arbitrary arity. 

Since the data is multimodal and the query is personalized, our framework aims to study the applicability of LLMs to retrieve scores of some entities for some scoring construct. 




\begin{figure}[ht]
    \centering
    \includegraphics[width=0.5\textwidth]{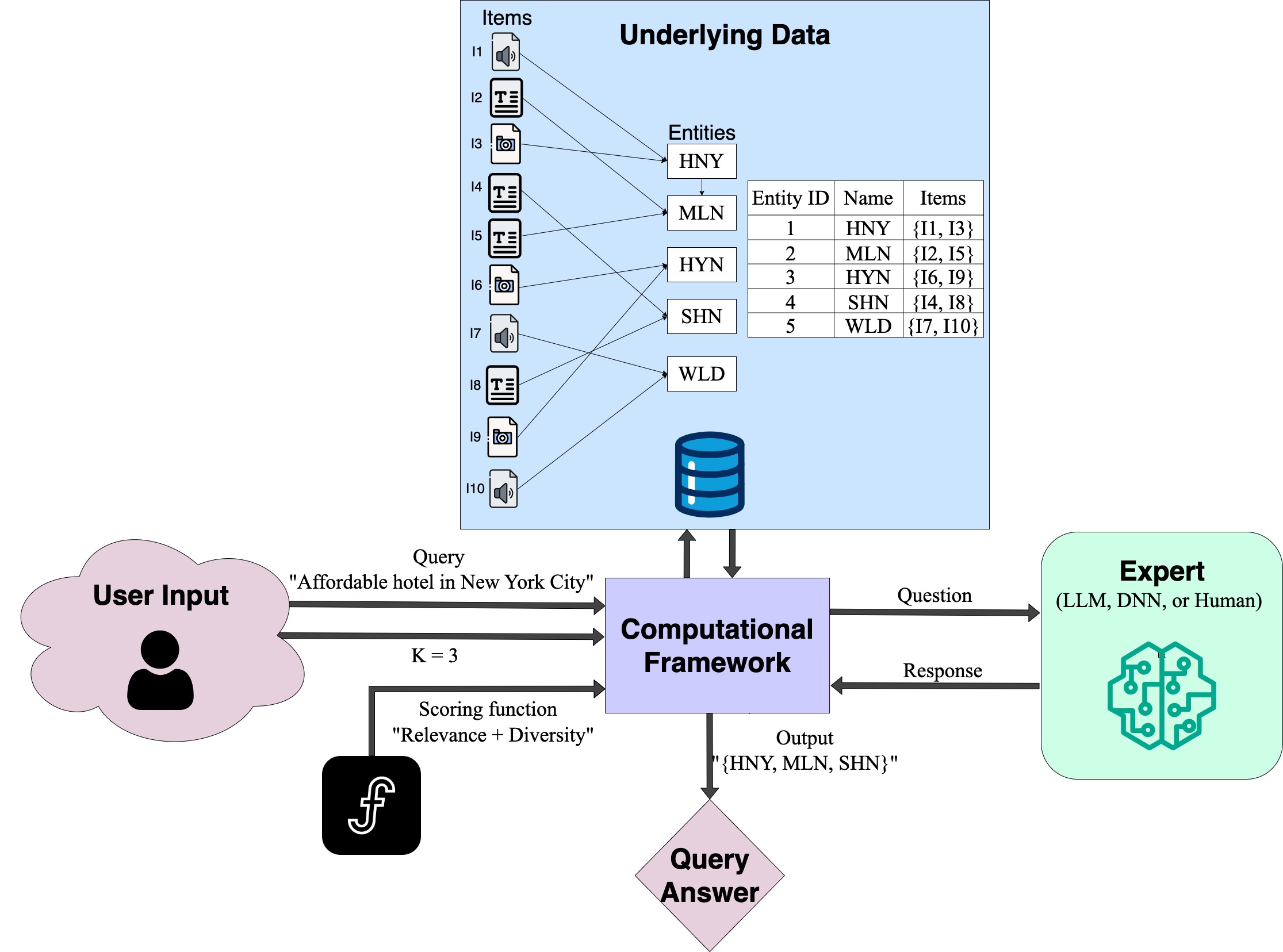}
    \caption{\small Proposed Framework}
    \label{fig:motivating_scenario}
\end{figure}

\smallskip \noindent {\bf Challenges.} 
The overarching challenge is to support any set-based queries with user-defined scoring functions while minimizing the number of calls made to LLMs. The goal is to return the exact answer. Imagine a snapshot of the process - where at a given point in time, partial scores of the set of possible top-k sets (each such set is called a candidate) are computed. Based on those partial scores, the key challenge is to develop a principled ``model'' that quantifies the likelihood of each candidate being the actual top-$k$ answer. The actual query answer therefore is a probability distribution function (pdf), where each candidate is a possible outcome and has a probability associated. Designing this model is non-trivial as computing the probability of a candidate  being the actual top-$k$ is not {\em independent} from other sets that share common entities with it. Given the probabilistic model, the next question to the LLM should be the one that minimizes {\em uncertainty} in the current pdf. Therefore, a challenge is to be able to model ``uncertainty'' in the current pdf and identify the next question that minimizes it. 

\smallskip \noindent {\bf Contributions.} We propose a computational framework that works for arbitrary user-defined scoring functions, as long as the scoring function can be decomposed into a set of constructs for which LLM calls could be invoked. At a given point in time, the framework has access to a set of $M$ candidates (each containing $k$ entities), one of which will eventually become the final query answer. The framework can be summarized into $4$ well-defined tasks:\\
\indent {\bf Task 1: Computing score bounds of candidate sets.} At a given time, the framework knows either the partial or the full score of a candidate. When only partial score of a candidate is known, it computes the lower and upper bounds of a candidate's score. \\ 
\indent {\bf Task 2: Probabilistic model for finding the answer.} This task makes a significant contribution by introducing a principled probabilistic model that estimates the likelihood of each candidate being the true top-$k$. When the score of a candidate is only partially known, it is treated as a discrete pdf with values uniformly distributed between the lower and upper bounds. The likelihood that one candidate having a higher score than the other is computed using {\em max-convolution} of pdfs~\cite{rahman2015worker}. This task considers two cases: computing max-convolution assuming {\bf independence} between candidates, and accounting for {\it dependence} between candidates when they share common entities. Ultimately, this task formalizes the probability of each candidate being the winner or the actual answer. The algorithms developed for this task address both memory and computational efficiency, carefully considering potential bottlenecks in their design.\\
\indent  {\bf Task 3: Determining the next question.} The  query answer is modeled as a random variable with $M$ possible outcomes (one per candidate), and Task 2 formalizes how to compute the pdf of each outcome. The most widely used mathematical framework for quantifying uncertainty is entropy~\cite{renyi1961measures}, with Shannon entropy in information theory being the most common, and the one we adopt in this work. Therefore, the next question to ask the oracle is the one that maximally reduces the uncertainty of this random variable (when entropy is $0$, no further question needs to be asked). We make no further assumption about the possible responses to a question asked to the LLM and design the algorithm for this task with an emphasis on computational efficiency.\\
\indent {\bf Task 4: Response processing.} This task involves aggregating responses from the oracle. This depends on whether the oracle provides a discrete answer or a range. 

For each of these tasks, we study computational challenges and design efficient solutions. 

\smallskip \noindent {\bf Empirical Evaluations.}  We present a comprehensive experimental evaluation of our framework using three large  datasets involving images, reviews, and structured data and three user-defined scoring functions. We implemented multiple baseline solutions as appropriate. Our experimental results indicate the efficacy of the proposed probabilistic models, as it achieves an order of magnitude improvement over baselines in requiring LLM calls while ensuring the achieved results are fully accurate. We measure the number of LLM calls and find that our method for determining the next question significantly reduces the cost compared to baselines. We also find that while the cost increases with an increasing $k$ or number of candidates, our scalability experiments indicate that the proposed models and algorithms could be used inside large-scale applications.

The rest of the paper is organized as follows: Section~\ref{sec:dm} presents our data model and problem definitions. Sections~\ref{sec:framework} and~\ref{sec:alg} contains our framework and its algorithms. We provide experimental evaluations in Section~\ref{sec:exp}. Related work is summarized in Section~\ref{sec:rel}. Several extensions of the proposed work are discussed in Section~\ref{sec:ext} and we conclude in Section~\ref{sec:conc}.
\section{Data Model \& Problem Definition}\label{sec:dm}

\subsection{Data Model}
\noindent \textbf{Input Query:} A user writes a query $q$ with an input parameter $k$, with the goal of obtaining the top-$k$ set of entities to $q$.

In our running example, $q$ = "Hotels with Unique Decor in Manhattan" and we need to find the top-$3$ hotels with respect to $q$.  

\noindent \textbf{Database.} The query comes to a database $D$ containing data of different modalities, such as, pictures, text, audio, etc. Item $i$ represents a physical entity $e$ (such as a hotel). There may be multiple items associated with an entity. Let $E$ represent the set of unique entities, such that, $|E|=n$.

\autoref{fig:motivating_scenario} contains a snapshot of our example with $5$ entities or hotels. Each entity has two associated items. For instance, the  hotel "HNY", has \{ i1, i3 \}, an audio and an image respectively. \\

\noindent \textbf{Scoring function and characteristics.} A user provides a set-based scoring function $\mathcal{F}$ that admits a query  $q$ and an integer $k$, and scores any subset $s \subseteq E$ entities, such that $|s|=k$, i.e., $\mathcal{F}(s,q) \rightarrow R$. 

A scoring function needs to satisfy two conditions: (a) it is set-based, meaning it computes the score of a set of size $k$ and need not provide the individual order of  entities in the set, (b) it is decomposable into $\ell$ constructs each of which could be sent to an oracle (in our case LLM) to obtain its predicted value. 

Using our running example, $\ell=2$ and contains two constructs: a unary construct $Rel(q,e)$, and a binary construct $Rel(e_i,e_j)$.

\noindent \textbf{Top-$k$ set.} Given a query $q$ and an integer $k$, a set of entities $E$, and a set-based scoring function $\mathcal{F}$, $s_q$ is a top-$k$ set, if $s_q \subseteq E$, $|s_q|=k$, and 
$s_q = argmax_{s \subseteq E: |s|=k} \mathcal{F}(s,q)$.

Given our running example, 
 $c_1 = \{ HNY, MLN, HYN \}$ is a candidate set of top-$3$ hotels. Determining the top-$k$ set requires more insights on the unknown values in \autoref{tab:ny_hotels_relevance} and \autoref{tab:ny_hotels_diversity}.

\noindent \textbf{Candidate.} A candidate $c$ is a set of $k$ entities with score $\mathcal{F}(c,q)$. When the score of $c$ is partially computed, it's between a range, (LB\_c, UB\_c). Score of $c$ is then treated as a uniform distribution with $m$ discrete values between $(LB\_c, UB\_c)$.

Using our example, the set of hotels $c_1=\{ HNY, MLN, HYN \}$ is a candidate, and its score is a uniform probability distribution function (PDF) within the range $(0, 2)$ with 5 discrete values as $0, 0.5, ..., 2.0$.  

\noindent \textbf{Candidate set.} At a given point in time, the process keeps track of $C=\{c_1, c_2, \ldots, c_M\}$ candidates, where $M \leq \binom{n}{k}$. 

Given our running example, $M=3$ and there are three possible candidates, \[
    c_1 = \{\text{HNY, MLN, HYN}\} \]
 \[
    c_2 = \{\text{HNY, MLN, WLD}\},c_3 = \{\text{HNY, HYN, SHN}\}
\]

\noindent \textbf{Questions.} Given a user-defined scoring function $\mathcal{F}$ with $\ell$ constructs, a question $\mathcal{Q}$ asks the predicted score over one of the $\ell$ constructs appropriate inputs.

For our example function, $\mathcal{Q}(e_i,q,Rel)$ asks for the relevance score between the query $q$ and an entity $e_i$, whereas, $\mathcal{Q}(e_i,e_j,Div)$ asks for the diversity score between entities $e_i$ and $e_j$.
 

Considering our running example, one can propose different questions about the hotels mentioned in \autoref{tab:ny_hotels_relevance}. For instance, $\mathcal{Q}(HNY,q,Rel)$ asks for the relevance between  hotel "HNY" and $q$, and $\mathcal{Q}(WLD,SHN,Div)$ refers to a question about the diversity between hotels "WLD" and "SHN".

\noindent \textbf{Response from oracle (LLM).} Given a question \( \mathcal{Q} \), and a pre-specified range of values \([MIN,MAX]\) which the oracle is instructed to respond within, its response \( \mathcal{R} \) is defined as one of:
\begin{itemize}
    \item \( \mathcal{R}_1 = r \in [MIN,MAX] \), where \( r \) is a discrete value within the specified range \([MIN, MAX]\).
    \item \( \mathcal{R}_2 = (l,u) \), where \( MIN \leq l < u \leq MAX \) is a continuous range of values that the oracle can provide as a response.
\end{itemize}

As an example, the last column of \autoref{tab:ny_hotels_relevance} is an oracle response of type $\mathcal{R}_1$, and each cell in \autoref{tab:ny_hotels_diversity} is an oracle response of type $\mathcal{R}_1$. For the question $\mathcal{Q}(HYN,q,Rel)$, the oracle returns a discrete value $Rel(HNY,q) = 0.5$ for which the $[MIN,MAX]$ are assumed to be $0$ and $1$ respectively. At a given point in time, some responses might be unknown and are denoted by $U$ in the tables. For a question $\mathcal{Q}(WLD,SHN,Div)$, we could have $Div(WLD,SHN) = [0.3,0.7]$ as the response which is a continuous range. 

We study the former kind in depth in this work, and explain how the latter could be adapted in Section~\ref{sec:ext}.


\subsection{Formalism}
Given a scoring function $\mathcal{F}$, we define the set of all possible questions, denoted as \( Q_U \), as the union of $\ell$ decomposable constructs instantiated with input data, as appropriate.

In our example,  \( Q_U \) contains relevance on each of the $5$ hotels in $D$, plus a total of $\binom{5}{2}$ pairwise diversity questions. At a given point in time, the response to some of these questions could be known. 

\textbf{Known information.} Let \( Q_K \) be the set of questions whose responses are known. Thus, we have:
\[
Q_K = \{ q \in Q_{ALL} \mid \text{response to } q \text{ is known} \}
\]

\textbf{Unknown information.} Conversely, we can define the unknown information as the set of questions for which the responses have not yet been obtained. This set, denoted as \( Q_{ALL} \), is given by:
\[
Q_U = Q_{ALL} \setminus Q_K
\]

Using our running example, the response to each relevance and diversity question has been mapped to a cell in \autoref{tab:ny_hotels_relevance} and \autoref{tab:ny_hotels_diversity}, respectively. Cells marked as $U$ indicate an unknown value for that specific question. Hence, based on these two tables, we have:

\[
Q_K = \left\{
\begin{array}{lll}
Rel(\text{MLN}, q) & Rel(\text{WLD}, q) & Rel(\text{HNY}, q) \\
Div(\text{HNY}, \text{MLN}) & Div(\text{MLN}, \text{HNY}) & Div(\text{SHN}, \text{WLD}) \\
Div(\text{HYN}, \text{SHN}) & &
\end{array}
\right.
\]

\[
Q_U = \left\{
\begin{array}{lll}
Rel(\text{HYN}, q) & Rel(\text{SHN}, q) & Div(\text{HYN}, \text{HNY}) \\
Div(\text{HYN}, \text{MLN}) & Div(\text{HYN}, \text{WLD}) & Div(\text{HNY}, \text{HYN}) \\
Div(\text{MLN}, \text{HYN}) & Div(\text{SHN}, \text{HYN}) &
\end{array}
\right.
\]

Given the running example with $C=\{c_1,c_2,c_3\}$, we aim to find the next best question, that is the most useful for finding the query answer. 

\begin{table}[h!]
\centering
\begin{minipage}{0.45\textwidth}
\centering
\begin{tabular}{|c|l|}
\hline
\textbf{Notation} & \textbf{Definition} \\ \hline
$m$ & \# discrete score values \\ \hline
$M$ & \# candidates \\ \hline
$n$ & \# entities \\ \hline
$e$ & Entity \\ \hline
$c$ & Candidate \\ \hline
$C$ & Candidates set \\ \hline
$Q_u$ & Set of unknown questions \\ \hline
\end{tabular}
\caption{\small Notations}
\end{minipage}%
\hspace{1cm}  
\begin{minipage}{0.45\textwidth}
\centering
\begin{tabular}{|c|l|c|}
\hline
\textbf{Abbreviation} & \textbf{Hotel Name} & \textbf{Relevance} \\ \hline
HNY & Hilton NY      & U \\ \hline
MLN & Marriott LN    & 1.0 \\ \hline
HYN & Hyatt NY       & 1.0 \\ \hline
SHN & Sheraton NY    & 0.0 \\ \hline
WLD & Waldorf NY     & 0.5 \\ \hline
\end{tabular}
\caption{\small Relevance scores of the hotels}
\label{tab:ny_hotels_relevance}
\end{minipage}
\end{table}

\begin{table}[h!]
\centering
\begin{tabular}{|c|c|c|c|c|c|}
\hline
       & HNY  & MLN  & HYN  & SHN  & WLD  \\ \hline
HNY    & -   & 1.0  & 0.5    & 0.5  & 0.5  \\ \hline
MLN    & 1.0  & -    & U    & U  & U  \\ \hline
HYN    & 0.5    & U    & -    & U  & 0.5    \\ \hline
SHN    & 0.5  & U  & U  & -    & U  \\ \hline
WLD    & 0.5  & U  & 0.5   & U  & -    \\ \hline
\end{tabular}
\caption{\small Diversity scores between pair of hotels}
\label{tab:ny_hotels_diversity}
\end{table}

\textbf{Problem Definition:} Given \( n \) entities, an integer $k$, an input query \( q \), a user-defined set-based scoring function \( \mathcal{F} \) with $\ell$ constructs, and already known information \( Q_K \), identify the next question to be asked to the oracle (LLM) that has the highest likelihood of identifying the true top-$k$ set.

We aim to find the next best question to ask, i.e., the one that is most useful in finding the the top-$k$ set with the highest score.
    
Potentially, the next question could be any of the cells marked as U (Unknown) in \autoref{tab:ny_hotels_relevance} or \autoref{tab:ny_hotels_diversity}. As an example, consider two possible questions to ask, and compare how useful they can be:
    
    \begin{itemize}
        \item \( Q_1 \): \(Div(MLN, HYN)\) - the diversity score between "MLN" and "HYN"
        \item \( Q_2 \): \(Div(HYN, SHN)\) - the diversity score between "HYN" and "SHN"
    \end{itemize}

Before asking these questions, by using the known values have:
\[
\begin{aligned}
    \mathcal{F}(c_1, q) = 
    & Rel(\text{HNY},q) + Rel(\text{MLN},q) + Rel(\text{HYN},q) + Div(\text{HNY, MLN}) \\
    & + Div(\text{HNY, HYN}) + Div(\text{MLN, HYN}) \\
    = & 3.5 + Rel(\text{HNY},q) + Div(\text{MLN, HYN}) \\
\end{aligned}
\]

\[
\begin{aligned}
    \mathcal{F}(c_2, q) =  3.0 + Rel(\text{HNY},q) + Div(\text{MLN, WLD}) 
\end{aligned}
\]

\[
\begin{aligned}
    \mathcal{F}(c_3, q) =  2.0 + Rel(\text{HNY},q) + Div(\text{HYN, SHN}) 
\end{aligned}
\]

Suppose we choose to ask \( Q_1 \), revealing that \(Div(MLN, HYN) = 1.0 \). Knowing this additional information, we update the overall score of \( c_1 \) as follows:
\[
\begin{aligned}
    \mathcal{F}(c_1, q) = 
     & 4.5 + Rel(\text{HNY},q) \\
\end{aligned}
\]

Since all the known/unknown values are within a pre-specified range, in this example [0,1], regardless of the values of remaining unknown variables in $\mathcal{F}(c_2, q)$, $\mathcal{F}(c_1, q)$, and $\mathcal{F}(c_1, q)$, we will have:
    \[
\begin{aligned}
    \mathcal{F}(c_1, q) \geq \mathcal{F}(c_2, q) \quad \text{AND} \quad
    \mathcal{F}(c_1, q) \geq \mathcal{F}(c_3, q)
\end{aligned}
\]

    Hence, in this case, we can conclude that $c_1$ is the query answer. By asking \(Div(MLN, HYN)\) from the LLM, we could potentially get a score different from $1.0$ as well. However, in this example, if \(Div(MLN, HYN) \geq 0.5 \), we could still come up with the same conclusion as we did. Therefore, asking \(Div(MLN, HYN)\) can likely result in finding the top-$3$ set.  

 On the other hand, suppose we choose to ask \( Q_2 \), resulting in \(Div(MLN, HYN) = 1.0 \). Knowing this additional information, we update the overall scores of \( c_3 \) as follows:  
    \[
    \begin{aligned}
        \mathcal{F}(c_3, q) = 
         & 3.0 + Rel(\text{HNY},q) \\
    \end{aligned}
    \]

In this case, the top-$3$ set is not final yet. In more detail, based on different possible values of $Div(MLN, HYN)$ and $Div(HYN, SHN)$ corresponding to $c_1$ and $c_2$ respectively, any of them could have a higher final score. Hence, asking $Q_2$ will determine the top-$k$ set. 

    Generally, at a given point in time, there might be more than one question that could likely determine the top-$k$ set. Alternatively, there might be no question that could determine the top-$k$ set at a specific time, but choosing one question over another might result in a fewer number of required questions to ask in the future to determine the top-$k$ set. 
        
    Hence, it is crucial that at each step we ask the question that is most likely going to reduce the uncertainty in finding the top-$k$ set and guide us toward identifying the query answer more efficiently.


\section{Proposed Framework}\label{sec:framework}
Depending on the scoring function $\mathcal{F}$ the framework identifies one, several, or all candidate top-$k$ sets as answers to a query $q$. Wlog, we assume a set $C$ of $m$ such candidate sets and present $4$ essential tasks to solve our problem.

\noindent {\bf Task 1. Computing Score Bounds of Candidate Sets.}
At a given point in time, only the partial score of a candidate $c \in C$ is known. 

\noindent {\bf Technical Problem: Lower and upper bounds of score of $c$.} 
Given $\mathcal{F}$, and known information $Q_K$, compute the lower (resp. upper) bound of score of $c$ as the smallest (resp. largest) score of $c$.

Considering our example, to find the lowest and highest possible scores of $c_1$, we should replace known values from \autoref{tab:ny_hotels_relevance} and \autoref{tab:ny_hotels_diversity} in $\mathcal{F}_{\text{min}}(c_1, q)$ and $\mathcal{F}_{\text{max}}(c_1, q)$ and substitute unknown values in the formula with minimum and maximum possible response values, which are assumed to be $0$ and $1$ respectively. Therefore, we can compute the minimum and maximum overall scores of $c_1$ as follows:
\[
\mathcal{F}_{\text{min}}(c_1, q) = Rel(\text{HNY},q) + Rel(\text{MLN},q) + 0 + Div(\text{HNY, MLN}) + 0 + 0
\]
\[
\mathcal{F}_{\text{min}}(c_1, q) = 0.5 + 1.0 + 0 + 0.5 + 0 + 0 = 2.0
\]

\[
\mathcal{F}_{\text{max}}(c_1, q) = Rel(\text{HNY},q) + Rel(\text{MLN},q) + 1 + Div(\text{HNY, MLN}) + 1 + 1
\]
\[
\mathcal{F}_{\text{max}}(c_1, q) = 0.5 + 1.0 + 1 + 0.5 + 1 + 1 = 5.0
\]

Hence, the lower bound (LB) and upper bound (UB) of \( c_1 \) are:
\begin{align*}
    (\text{LB, UB})_{c_1} &= (2.0, 5.0)
\end{align*}
    
\noindent {\bf Task 2: Probabilistic Model for Finding the Answer.} 
 Given a set $\mathcal{C}$ of $m$ candidate top-$k$ sets, the probability $P(c)$ represents that candidate $c$ is the answer ($c^*$) of the query.
 
\noindent{\bf Technical Problem:} The probability of a candidate $c$ being the query answer $c^*$ is:
 \begin{align}\label{eq:prob}
    P(c = c^*) &= P\left(\bigcap_{c_j \in \mathcal{C}} \mathcal{F}(c, q) \geq \mathcal{F}(c_j, q)\right)
\end{align}


\noindent{\bf Independence among candidates.}
In the simplest case, each candidate has unique entities with no entity in common across any two candidates. The joint probability could be calculated as:

\begin{align}\label{eq:ind}
P(c = c^*) &= \prod_{i=1}^{M} P\left(\mathcal{F}(c, q) \geq \mathcal{F}(c_i, q)\right)
\end{align}

\noindent{\bf Dependence among candidates.}
When there exist entities that are common across multiple candidates, the probabilistic model capturing a candidate being the winner (or query answer) needs to account for conditional probabilities, as follows:


\begin{align}\label{eq:joint}
P(c = c^*) = \prod_{i=1}^{M} P\left(\mathcal{F}(c, q) \geq \mathcal{F}(c_i, q) \mid \bigcap_{j=1}^{i-1} \left( \mathcal{F}(c, q) \geq \mathcal{F}(c_j, q) \right) \right)
\end{align}

Equation~\ref{eq:joint} takes the following form once expanded.
\begin{align}\label{eq:jointexpand}
P(c = c^*) &= P\left(\mathcal{F}(c, q) \geq \mathcal{F}(c_1, q)\right) \times \notag \\
& \quad P\left(\mathcal{F}(c, q) \geq \mathcal{F}(c_2, q)  \mid \mathcal{F}(c, q) \geq \mathcal{F}(c_1, q) \right) \times \notag \\
& \quad \ldots \times P\left( \mathcal{F}(c, q) \geq \mathcal{F}(c_M, q) \mid \mathcal{F}(c, q) \geq \mathcal{F}(c_1, q), \right. \notag \\
& \quad \quad \left. \mathcal{F}(c_2, q) \geq \mathcal{F}(c_2, q), \mathcal{F}(c_2, q) \geq \mathcal{F}(c_3, q), \right. \notag \\
& \quad \quad \left. \ldots, \mathcal{F}(c, q) \geq \mathcal{F}(c_{M-1}, q)\right)
\end{align}

Using our running example, 
\begin{align*}
P(c_2= c^*) &= P\left(\mathcal{F}(c_2, q) \geq \mathcal{F}(c_1, q)\right) \times \notag \\
& \quad P\left(\mathcal{F}(c_2, q) \geq \mathcal{F}(c_3, q) \mid \left( \mathcal{F}(c_2, q) \geq \mathcal{F}(c_1, q) \right)\right) 
\end{align*}

\noindent {\bf Task 3: Determining the Next Question.} 
For a candidate $c$,  $P(c)$ represents the probability of $c$ being the answer. Given the set of candidates $C$, the actual answer $c^*$ is thus a random variable with probability distribution $\mathcal{A}$, representing the probability of each candidate being the answer.
 \begin{definition}
     {\bf Uncertainty in $\mathcal{A}$.} The uncertainty in $c^*$ is modeled as the entropy~\cite{renyi1961measures} in $\mathcal{A}$, as follows:
    \begin{align*}
    H(\mathcal{A}) &= - \sum_{c \in \mathcal{C}} P(c) \log(P(c))
\end{align*}
where $P(c)$ is the probability of candidate $c$ to be the query answer. 
 \end{definition}

\noindent{\bf Technical Problem: Selecting the next best question.} Let $H(\mathcal{A})$ be the entropy associated with the query answer $c^*$. When $Q \in Q_U$ is provided by the oracle, let $H(\mathcal{A'})$ be the reduced entropy, select $Q \in Q_U$ such that $H(\mathcal{A}) - H(\mathcal{A'})$ is maximized. In other words, maximizing the difference between $H(\mathcal{A}) - H(\mathcal{A'})$ is equivalent to minimizing  $H(\mathcal{A'})$. Entropy measures the uncertainty associated with the query answer - thus, minimizing this enhances predictability. In fact, when the entropy is $0$, the query answer could be decided with complete certainty.

Using our running example with the three candidates, $C = \{c_1, c_2, c_3\}$, the associated entropy is $0.604$.

In Section~\ref{subsec:nextquestion}, we will propose algorithms that select the next question to ask s.t. entropy is minimized. Minimizing entropy will minimize uncertainty in finding the query answer $c^*$. Given the running example, we shall show that our solution will choose $Q = Div(HNY, MLN)$ as the next question. Given the LLM returns $Div(HNY, MLN) = 1$, $H(c^*) = 0$ and $c_1$ becomes the query answer.

\noindent {\bf Task 4: Response Processing.} The final task is to process the obtained response from the oracle. There are two obvious possibilities: a. the oracle returns a discrete value, b. the oracle returns a range. There are also other types of responses such as aggregating multiple discrete oracle responses, or aggregating multiple oracle responses each providing a range. 

We study the first case closely in this work. In Section~\ref{subsec:resp}, we discuss how a discrete response (e.g., 0.7) from a single oracle is processed. In Section~\ref{sec:ext}, we discuss other alternatives and how they could be adapted.




\section{Algorithms}\label{sec:alg}
We are now ready to provide  algorithms that solve the tasks described in Section~\ref{sec:framework}.

\subsection{Computing Score Bounds of Candidate Sets}
\label{bound_computation}

Given the set of candidates,  we discuss algorithms to compute score bounds of each of them. Since only partial scores of the candidates are known, this process computes the score bounds (LB\_c, UB\_c) of each candidate $c$. These bounds are calculated based on $\mathcal{F}(c, q)$ and the known information $Q_k$ and are updated when $Q_k$ is updated. Given $c$, the process looks at the constructs in $\mathcal{F}(c, q)$, and uses the actual scores for the parts that could be obtained from $Q_k$. For the rest, it uses the minimum possible score to produce LB\_c and the maximum possible score to produce UB\_c. Without any further assumption in place, this gives tight bounds.

To simplify exposition, let us assume that the minimum and the maximum relevance and diversity scores of each construct is within \((0, 1)\). On the other hand, as shown above, there exists a total of $6$ constructs contributing to the overall score of a candidate. Therefore, initially, the LB and UB of any candidate is $0$ and $6$, respectively.

Now, let us assume \( Rel(WLD, q) \) is obtained and \( Rel(WLD, q) = 0.5 \). Hence, we need to update LB and UB of those candidates containing \( WLD \). As an example, $c_2$ now becomes
\begin{align*}
\mathcal{F}(c_2, q) & = Rel(HNY, q) + Rel(MLN, q) + Rel(WLD, q) + Div(HNY, MLN) \\
          & \quad + Div(HNY, WLD) + Div(MLN, WLD) \\
          & = Rel(HNY, q) + Rel(MLN, q) + 0.5 + Div(HNY, MLN) \\
          & \quad + Div(HNY, WLD) + Div(MLN, WLD) \\
          & \implies 0.5 \leq \mathcal{F}(c_2, q) \leq 5.5
\end{align*}

\subsection{Probabilistic Model for Finding the Answer}
\label{winning_probability}
The algorithm designed for this task, namely finding $c^*$, needs to calculate the probability that the score of $c^*$ is larger than the scores of all other candidates, as defined in Equation~\ref{eq:prob}. We present algorithms for two variants - {\tt ProbInd} assumes independence among candidates and {\tt ProbDep} accounts for potential dependencies among candidates. Since the score of each candidate is a uniform probability distributions within (LB,UB), our solution requires to compute max convolution of probability distributions~\cite{rahman2015worker}, as defined below.

\begin{definition}
{\bf (Max-Convolution of Distributions).}
Assume that $f(c)$ and $g(c_1)$ are the pdfs of the two independent random variables $c$ and $c_1$ respectively. The pdf of the random variable $Max(c, c_1)$ is the max convolution of the two pdfs and is calculated as follows:
$P\left(\mathcal{F}(c_1, q) \geq \mathcal{F}(c_2, q)\right)  = \Sigma_{\forall x, x_1} P(c=x) \times P(c_1=x_1),{ x \in [LB\_c, UB\_c], x_1 \in [LB\_{c_1}, UB\_{c_1}] : x \geq x_1,} $
\end{definition}
Figure \ref{fig:max_conv} shows $Max(c_1, c_2)$ 

\begin{figure}[ht]
    \centering
    \includegraphics[width=0.4\textwidth]{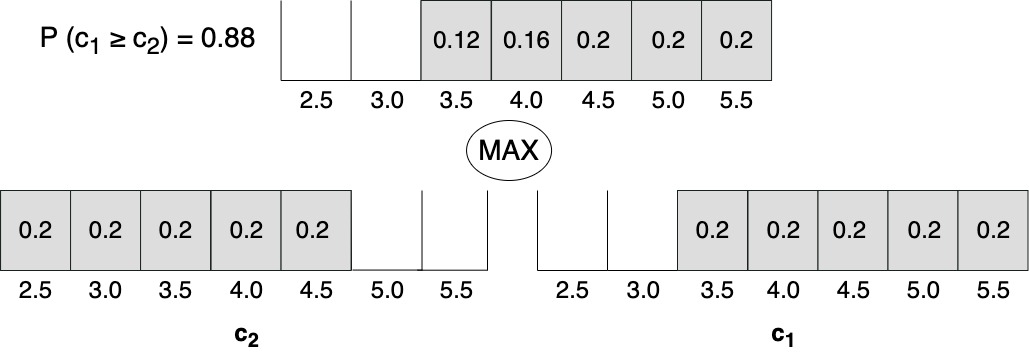}
    \caption{Computing \(\mathbf{P(c_1 \geq c_2)}\)}
    \label{fig:max_conv}
\end{figure}

\noindent \textbf{Case A - Independence among candidates}
Given the set $C$ of $M$ candidates, if every candidate contains entities that are only present in that candidate, the probability of a candidate $c$ being the winner, i.e., $P(c = c^*)$ is the joint probability of $c$ being larger than every other candidate, and could be calculated using Equation~\ref{eq:ind}. Algorithm {\tt ProbInd} does that.

Consider an imaginary example of the following kind $C=\{c_1,c_2,c_3\}$, such that $c_1=\{e_1,e_2,e_3\}, c_2=\{e_4,e_5,e_6\},c_3=\{e_7,e_8,e_9\}$. The probability of $c_1$ being the answer could be expressed as follows:

\begin{align*}
P(c_1 = c^*) &=  P\left(\mathcal{F}(c_1, q) \geq \mathcal{F}(c_2, q)\right) \times P\left(\mathcal{F}(c_1, q) \geq \mathcal{F}(c_3, q)\right)
\end{align*}

Specifically, the score of each candidate $c$ is within the range (LB\_c, UB\_c). Algorithm {\tt ProbInd} treats the score of each candidate $c$ as a uniform distribution within (LB\_c, UB\_c) with $m$ discrete values, therefore, the probability of $P\left(\mathcal{F}(c_1, q) \geq \mathcal{F}(c_i, q)\right)$ could be calculated using the maximum convolution of two probability distributions~\cite{rahman2015worker}.

\begin{lemma}
 {\tt ProbInd} takes $\Theta(M^2m)$ running time, where $M$ is the number of candidates, and $m$ is the number of discrete values of the pdf designating the score of candidate $c$.
\end{lemma}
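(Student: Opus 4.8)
The plan is to analyze the running time of {\tt ProbInd} by decomposing it into the cost of a single pairwise comparison $P(\mathcal{F}(c, q) \geq \mathcal{F}(c_i, q))$ via max-convolution, and then counting how many such comparisons are needed to evaluate Equation~\ref{eq:ind} for all $M$ candidates. First I would establish the cost of one max-convolution. Each candidate's score is a uniform pdf supported on $m$ discrete values in $(\text{LB}\_c, \text{UB}\_c)$. By the definition of max-convolution, $P(\mathcal{F}(c, q) \geq \mathcal{F}(c_i, q)) = \sum_{x \geq x_1} P(c = x)\, P(c_i = x_1)$, which is a sum over pairs $(x, x_1)$ with $x$ ranging over the $m$ support points of $c$ and $x_1$ over the $m$ support points of $c_i$. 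A naive evaluation is $\Theta(m^2)$, but I would argue the sum can be organized in $\Theta(m)$ time: sort/scan the combined support once (the supports are already regular grids of size $m$), maintain a running prefix sum of $P(c_i = x_1)$ over points $x_1 \leq x$, and accumulate $P(c = x)$ times that prefix sum as $x$ advances. This gives $\Theta(m)$ per pairwise comparison. (If the algorithm as written in the paper does the naive $\Theta(m^2)$, the bound would instead be $\Theta(M^2 m^2)$, so the key claim to verify is that the implementation uses the prefix-sum trick — this is the step I would check most carefully against the pseudocode.)

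Next I would count the comparisons. For a fixed candidate $c$, Equation~\ref{eq:ind} expresses $P(c = c^*)$ as a product of $M$ (or $M-1$, excluding the trivial self-comparison) terms, each a pairwise max-convolution probability costing $\Theta(m)$; forming the product costs an additional $O(M)$ multiplications, which is dominated. Hence computing $P(c = c^*)$ for one candidate is $\Theta(Mm)$. Repeating over all $M$ candidates gives $\Theta(M^2 m)$ total. For the matching lower bound (to get $\Theta$ rather than $O$), I would observe that the algorithm must at minimum examine every ordered pair of candidates and, for each, touch all $m$ support values of the pdfs involved in the convolution — there are $\Theta(M^2)$ such pairs and each forces $\Omega(m)$ work under the uniform-pdf representation — so $\Omega(M^2 m)$ work is unavoidable for this algorithm, establishing the tight bound.

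The main obstacle I anticipate is the $\Theta(m)$-per-convolution claim: one must be careful that the two pdfs may be supported on different grids (different LB/UB and possibly different spacing if $m$ is fixed but the ranges differ), so the prefix-sum scan must be over the merged, sorted set of breakpoints, which still has size $\Theta(m)$. A secondary subtlety is whether reusing partial products or shared sub-convolutions across the $M$ candidates could beat $\Theta(M^2 m)$ — but since Equation~\ref{eq:ind} treats each candidate's comparisons independently and the pairwise probabilities $P(\mathcal{F}(c,q) \geq \mathcal{F}(c_i,q))$ and $P(\mathcal{F}(c_i,q) \geq \mathcal{F}(c,q))$ are distinct quantities, no such sharing reduces the asymptotic cost, so the lower bound argument goes through. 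I would keep the write-up to the two ingredients — $\Theta(m)$ per comparison, $\Theta(M^2)$ comparisons — and state that their product is both an upper and a lower bound on {\tt ProbInd}'s running time.
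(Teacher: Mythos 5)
Your proposal is correct and follows essentially the same route as the paper's own (sketch) proof: a single pairwise comparison via max-convolution costs $\Theta(m)$, each candidate requires $\Theta(M)$ such comparisons for a per-candidate cost of $\Theta(Mm)$, and repeating over all $M$ candidates yields $\Theta(M^2m)$. The paper simply asserts the $\Theta(m)$ cost per max-convolution, whereas you supply the prefix-sum argument justifying it; that is a useful detail to verify against the implementation, since a naive evaluation of the double sum would give $\Theta(m^2)$ per comparison instead.
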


\begin{proof}
(sketch.) Computing Max-convolution of any two pdfs takes $\Theta(m)$ times. Therefore, computing the probability of a candidate $c$ being the winner takes $\Theta(Mm)$ time.  {\tt ProbInd} repeats this process on each candidate and therefore takes $\Theta(M^2m)$ times.
\end{proof}

\noindent \textbf {Case B - Dependence among candidates}
Algorithm {\tt ProbDep} is designed to identify the likely query answer from the candidate set, when the candidates have entities in common. It uses Equation~\ref{eq:joint} for that. Consider the running example $C=\{c_1,c_2,c_3\}$. Clearly, HNY exists in all $3$ candidates, MLN exists in both $c_1$ and $c_2$, as well as HYN exists in both $c_1$ and $c_3$. 

Two challenges immediately emerge: a. computational and storage bottleneck to compute the winning probability of a candidate using Equation~\ref{eq:joint}. b. disregarding the effect of common entities in the probability computation.

Using Equation \ref{eq:joint}, if there are $M$ candidates, the probability of a candidate being the winner is conditioned on as many as $M-1$ terms. Generally, given $M$ candidates, where the score of each candidate is a pdf with $m$ discrete values, there are $m^M$ possible combinations of scores. A naive implementation of {\tt ProbDep} needs to first identify which combinations satisfy the conditions expressed and computes the probability accordingly. This approach has a space complexity of $\mathcal{O}(m^M)$, which becomes prohibitively expensive as $M$ or $m$ increases.


\subsubsection{Avoiding memory and computational bottleneck}
We therefore study storage and computational efficiency in designing {\tt ProbDep}. Algorithm~\ref{alg:dep} contains the pseudocode. Recall Equation~\ref{eq:jointexpand} and note that $P(c = c^*)$ could be calculated as a sequence of pairwise probabilities. Instead of storing all $\mathcal{O}(m^M)$ combinations, {\tt ProbDep} decomposes the overall computation and performs it in pairwise steps. Given the overall formula for $P(c_1 = c^*)$:


The algorithm runs iteratively and at step $i$, it computes: \[
P(\mathcal{F}(c_1, q) \geq \mathcal{F}(c_2, q), \mathcal{F}(c_1, q) \geq \mathcal{F}(c_3, q), \dots, \mathcal{F}(c_1, q) \geq \mathcal{F}(c_{i+1}, q))
\] by using probabilities calculated from step $(i - 1)$, which is of the following form.
\[P(\mathcal{F}(c_1, q) \geq \mathcal{F}(c_2, q), \mathcal{F}(c_1, q) \geq \mathcal{F}(c_3, q), \dots, \mathcal{F}(c_1, q) \geq \mathcal{F}(c_{i}, q))\]

 Once step \(i\) is complete, the algorithm does not keep track of any past information from step \(i - 1\) anymore, and only maintains the results obtained from the latest step $i$. The process continues until all $M$ steps are complete.

Using our running example, $P(c_2 = c^*)$ is calculated in two steps: the algorithms first performs max-convolution to compute $P\left(\mathcal{F}(c_2, q) \geq \mathcal{F}(c_1, q)\right)$. In the next step, this computed information is then used to calculate $P\left(\mathcal{F}(c_2, q) \geq \mathcal{F}(c_3, q) \mid ( \mathcal{F}(c_2, q) \geq \mathcal{F}(c_1, q) \right)$. These two aforementioned probabilities are multiplied to produce $P(c_2 = c^*)$. 

\begin{algorithm} \label{compute_pdf_code}
\caption{Algorithm {\tt ProbDep}}\label{alg:dep}
\begin{algorithmic}[1]
    \State \textbf{Input:} Set of candidates $C = \{c_1, c_2, \dots, c_n\}$
    \State \textbf{Output:} Dictionary $all\_candidates\_probs$, where keys are candidates and values are their winning probabilities
    \State Initialize $all\_candidates\_probs \gets \{\}$
    
    \For{each candidate $c \in C$}
        \State Initialize $conditions \gets []$
        \State $prob\_recent \gets 1$
        
        \For{each candidate $c' \in C \setminus \{c\}$}
            \State Compute $P(\mathcal{F}(c, q) \geq \mathcal{F}(c', q) \mid conditions)$
            \State $prob\_recent \gets prob\_recent \times P(\mathcal{F}(c, q) \geq \mathcal{F}(c', q) \mid conditions)$
            \State Append $(c \geq c')$ to $conditions$
        \EndFor
        
        \State $all\_candidates\_probs[c] \gets prob\_recent$
    \EndFor
    
    \State \textbf{return} $all\_candidates\_probs$
\end{algorithmic}
\end{algorithm}

\textbf{Disregarding the effect of unknown common entities.} Given two candidates $c_i$ and $c_j$, {\tt ProbDep} needs to eliminate the effect of common unknown questions between $c_i$ and $c_j$ to compute the probability $P(\mathcal{F}(c_i, q) \geq \mathcal{F}(c_j, q))$.

 As an example, $c_1$ and $c_2$ have one unknown question in common, which is $R(HNY, q)$. {\tt ProbDep} assigns $R(HNY, q) = 0$ which means it has no effect on the score bounds of $c_1$ and $c_2$ particularly for computing $P(\mathcal{F}(c_2, q) \geq \mathcal{F}(c_1, q))$. Hence, assuming $R(HNY, q) = 0$, the new bounds are:

\[    \begin{array}{ll}
    c_1 & : \quad (3.5, 4.5), \\ 
    c_2 & : \quad (2.5, 3.5)
    \end{array}
\]



\begin{center}
\(P(\mathcal{F}(c_2, q) \geq \mathcal{F}(c_1, q)) = \frac{1}{9}\)    
\end{center}

Having $P(\mathcal{F}(c_2,q) \geq \mathcal{F}(c_1,q))$  computed, the algorithm needs to next compute \(P(\mathcal{F}(c_2,q) \geq \mathcal{F}(c_3,q) \mid \mathcal{F}(c_2,q)\geq \mathcal{F}(c_1,q))\) to get the final probability of $P(c_2 = c^*)$. 

However, {\tt ProbDep} now eliminates the effect of common unknown questions between $c_2$ and $c_3$, which is $R(HNY, q)$. Hence, assuming it is zero, the new bounds are:

\[\begin{array}{ll}
    c_2 & : \quad (2.5, 3.5), \\ 
    c_3 & : \quad (3.0, 4.0)
    \end{array}
\]
This gives \(P(\mathcal{F}(c_2,q) \geq \mathcal{F}(c_3,q) \mid \mathcal{F}(c_2,q)\geq \mathcal{F}(c_1,q))\) = $\frac{2}{45}$

Finally, these are multiplied to obtain $P(c_2 = c^*)$. Similarly, the probabilities \( P(c_1 = c^*) \) and \( P(c_3 = c^*) \) can be computed.

It could be shown that Algorithm~{\tt ProbDep} will produce
\[
P(c = c^*) =
\begin{cases} 
0.75 & \text{ } c = c_1, \\
0.24 & \text{ } c = c_2, \\
0.01 & \text{ } c = c_3.
\end{cases}
\]

 \begin{lemma}
 {\tt ProbDep} takes $\Theta(M^2m^2)$ running time and $O(m^2)$ space.
 \end{lemma}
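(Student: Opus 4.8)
The plan is to bound separately the per-iteration cost and the number of iterations, mirroring the structure of the earlier {\tt ProbInd} analysis but accounting for the extra work introduced by conditioning. First I would recall the pseudocode of Algorithm~\ref{alg:dep}: the outer loop runs over all $M$ candidates $c$, and for each $c$ an inner loop runs over the remaining $M-1$ candidates $c'$, at each step computing $P(\mathcal{F}(c,q)\geq\mathcal{F}(c',q)\mid conditions)$ and updating $prob\_recent$. So the total running time is $M$ times the cost of one inner loop, and the space is whatever a single inner loop needs to retain.

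The key step is to argue that one conditional-probability computation inside the inner loop costs $\Theta(m^2)$ rather than $\Theta(m)$. In the unconditioned case (first iteration) this is just a max-convolution of two $m$-valued uniform pdfs, which is $\Theta(m)$. But at a general step $i$ the algorithm maintains, from step $i-1$, the joint distribution over the pair $(\mathcal{F}(c,q), \text{``current conditioning state''})$ restricted to the feasible assignments — effectively a two-dimensional table of size $\Theta(m^2)$ indexed by a value of $c$ and a value of the most recent competitor (after eliminating common unknown entities, per the ``disregarding common entities'' discussion). Incorporating the next comparison $\mathcal{F}(c,q)\geq\mathcal{F}(c_{i+1},q)$ requires pairing each of the $\Theta(m)$ retained values of $\mathcal{F}(c,q)$ against each of the $\Theta(m)$ values of $c_{i+1}$, i.e.\ $\Theta(m^2)$ work, then marginalizing back down. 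Since this $\Theta(m^2)$ step is repeated $M-1$ times in the inner loop, one candidate's winning probability costs $\Theta(Mm^2)$, and over all $M$ candidates we get $\Theta(M^2m^2)$. For space, I would observe that at any instant {\tt ProbDep} only stores the first partition (values of $c$) and the single most-recent intermediate table, each of size $O(m)$ or $O(m^2)$, plus $O(M)$ for the $conditions$ list and $O(M)$ for the output dictionary; the dominant term is the $O(m^2)$ intermediate table, giving $O(m^2)$ space — crucially avoiding the naive $O(m^M)$ blow-up noted earlier.

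I would then assemble these: running time $= M \cdot (M-1) \cdot \Theta(m^2) = \Theta(M^2 m^2)$, and space $= O(m^2)$ (treating $M$, which also appears additively, as dominated or stating the bound as $O(m^2 + M)$ if one wants to be precise, though the lemma states $O(m^2)$ under the implicit assumption $M = O(m^2)$ or simply counting the distribution storage).

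The main obstacle I expect is making the $\Theta(m^2)$ per-step claim rigorous: one has to be careful that retaining only a two-dimensional summary (value of $c$ versus value of the latest competitor) is actually sufficient to correctly compute the next conditional probability — that is, that the conditioning history does not force the algorithm to remember more than the most recent partition. This is exactly the correctness content behind the ``we only maintain the results obtained from the latest step'' claim, and it hinges on the fact that each conditional factor $P(\mathcal{F}(c,q)\geq\mathcal{F}(c_{i+1},q)\mid \bigcap_{j\le i}\cdots)$ can be evaluated from the joint law of $\mathcal{F}(c,q)$ and the surviving probability mass alone. I would state this as a small invariant maintained across iterations and verify it by induction on $i$, which then makes both the time and space bounds fall out immediately.
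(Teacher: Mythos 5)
Your proposal is correct and follows essentially the same route as the paper's (much terser) proof sketch: a $\Theta(m^2)$ pairwise max-convolution per conditional factor, repeated $M-1$ times per candidate and over all $M$ candidates, with the $O(m^2)$ space bound coming from retaining only the most recent intermediate table rather than the naive $O(m^M)$ joint enumeration. Your additional remarks — the explicit $O(M)$ additive terms and the invariant one would need to verify that the latest pairwise summary suffices for the next conditional factor — are refinements the paper leaves implicit, but they do not change the argument.
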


\begin{proof}
(sketch.) Computing Max-convolution of two pdfs takes $\Theta(m^2)$ times. Therefore, computing the probability of a candidate $c$ being the winner takes $\Theta(Mm^2)$ time.  {\tt ProbDep} repeats this process on each candidate and therefore takes $\Theta(M^2m^2)$ times.
\end{proof}

\subsection{Determining the Next Question}\label{subsec:nextquestion}
In \autoref{winning_probability}, we discussed algorithms for the probabilistic model for finding the answer of the query. Formally speaking, the query answer is a random variable with $M$ possible outcomes (each per candidate), and their probability could be computed using {\tt ProbDep} or {\tt ProbInd}. We use {\em entropy} as a measure of the uncertainty associated with this random variable as follows:

\begin{align}\label{eq:entropy}
H(c^*) = - \sum_{i=1}^{n} p(c_i = c^*) \log p(c_i = c^*)
\end{align}

The entropy of $c_1, c_2, c_3$ in the running example is  $ H(c^*) = - \left( 0.75 \log(0.75) + 0.01 \log(0.01) + 0.24 \log(0.24) \right) = 0.604$
  
The next best question should therefore be the one that minimizes entropy (ideally makes it $0$). However, the challenge is to select this question from $Q_U$ without any further assumption about the response received from LLM. Algorithm {\tt EntrRed} is designed for this task (Algorithm~\ref{alg:entred} has the pseudocode).

 This algorithm leverages {\tt ProbDep} or {\tt ProbInd} and first identifies the candidate (let $c^+$ be that candidate) that has the  highest probability to be the winner. Given $Q_U$, it first narrows down to a smaller subset $Q_{U'}$ that involve  $c^+$. It leverages a subroutine {\tt QEF} (Subroutine~\ref{alg:qef}) to quantify the effect of every question $Q \in Q_{U'}$ and then selects $Q \in Q_{U'}$ that has the maximum score associated.

Subroutine {\tt QEF} works as follows: For every $Q \in Q_{U'}$, let $C_Q$ represent the subset of all candidates whose scores are influenced by $Q$, and $C'_Q$ be the other candidates. The score assigned to the question $Q$ is the sum of absolute difference between winning probability of candidates in $C_Q$ and winning probability of candidates in $C'_Q$. 

Intuitively, consider a pair of candidates $c_Q \in C_Q$ and $c'_Q \in C'_Q$. if $|P(c_Q = c^*) - P(c'_Q = c^*)|$ is high, it means that among $c_Q$ and $c'_Q$, one has a much higher winning probability compared to the other which means the score bounds of $c_Q$ and $c'_Q$ have a small overlap, and asking $Q$ as a question that influences the score of only one of the two candidates, can potentially diverge their bounds, and make one of the two candidates certainly better than the other, which will prune out one of the candidates. On the other hand, if $Q$ influences both candidates, asking it will affect score bounds of both candidates in the same way, and will not result in pruning one of them. This is why for a given $Q$, we divide candidates by two groups based on if they are influenced by $Q$ or not, and then consider the winning probability difference between candidates from the two groups. The higher each pairwise difference in winning probability is, the more valuable the $Q$ is since it becomes more likely to prune out candidates from the two groups by asking $Q$. 

\begin{algorithm}
\caption{Algorithm {\tt EntrRed}}\label{alg:entred}
\begin{algorithmic}[1]
    \State \textbf{Input:} Set of unknown questions $\mathcal{Q_U}$, Probability Density Function (PDF) for candidates
    \State \textbf{Output:} A single question $Q^*$ to be asked next
    \State Initialize $Q^* \gets \text{null}$
    \State Initialize $max\_prob \gets -\infty$
    \State Initialize $c^+ \gets \text{null}$  \Comment{Candidate with highest probability}
    
    \For{each candidate $c$}
        \State $prob(c) \gets \text{PDF}(c)$ \Comment{Obtain probability for candidate $c$}
        \If{$prob(c) > max\_prob$}
            \State $max\_prob \gets prob(c)$
            \State $c^+ \gets c$  \Comment{Update best candidate}
        \EndIf
    \EndFor

    \State Let $Q_{U'}$ be the set of questions contributing to the score of $c^+$
    \State Initialize $max\_score \gets -\infty$
    \For{each question $Q \in Q_{U'}$}
        \State $score(Q) \gets \text{Evaluate\_Question}(Q, C_Q)$ \Comment{$C_Q$: candidates influenced by $Q$}
        \If{$score(Q) > max\_score$}
            \State $max\_score \gets score(Q)$
            \State $Q^* \gets Q$  \Comment{Update best question}
        \EndIf
    \EndFor
    
    \State \textbf{return} $Q^*$
\end{algorithmic}
\end{algorithm}

\begin{algorithm}
\caption{Subroutine QEF}\label{alg:qef}
\begin{algorithmic}[1]
    \Function{evaluate\_question}{Q, $C_Q$}
        \State Initialize $score(Q) \gets 0$
        \For{each candidate $c \in C_Q$}
            \State Initialize $prob\_diff\_sum(c) \gets 0$
            \For{each candidate $c' \notin C_Q$}
                \State $prob\_diff(c,c') \gets \left| P(c = c^*) - P(c' = c^*) \right|$
                \State $prob\_diff\_sum(c) \gets prob\_diff\_sum(c) + prob\_diff(c,c')$
            \EndFor
            \State $score(Q) \gets score(Q) + prob\_diff\_sum(c)$
        \EndFor
        \State \Return $score(Q)$
    \EndFunction
\end{algorithmic}
\end{algorithm}

As we ask more questions, we expect the entropy to decrease. Finally, when the winner candidate is obtained, the entropy goes down to zero. 

Using our running example, {\tt EntrRed} narrows down to $Q_{U'}$ that are pertinent to $c_1$ only. Among unknown questions of $c_1$, $Q_1= R(HNY)$ and $Q_2=D(MLN, HYN)$ are the two possibilities. {\tt QEF} is invoked for both of them.
  $Q_1$ = R(HNY, q): We need to compute the sum of winning probability difference between candidates from $C_{Q1}$ with any other candidates. Since $C_{Q1} = \{c_1, c_2, c_3\}$, there will be no remaining other candidates for computing overlap with these 3. Hence, QEF(Q1) = 0
$Q_2$ = D(MLN, HYN): Since $C_{Q2} = {c_1}$, $QEF(Q2) = |p(c_1) - p(c_2)| + |p(c_1) - p(c_3)|=1.25$. $Q_2$ is therefore asked.
Let's assume that the oracle (LLM) returns $D(MLN, HYN) = 1.0$. Then, it could be shown, that
\[
P(c = c^*) =
\begin{cases} 
1.0 & \text{ } c = c_1, \\
0.0 & \text{ } c = c_2, \\
0.0 & \text{ } c = c_3.
\end{cases}
\]
Clearly, at this point the entropy of the random variable representing different outcomes in Equation~\ref{eq:entropy} is $0$. $c_1$ is thus returned as the answer of the query.

\begin{lemma}
  {\tt EntrRed} takes $O(|Q_c|M^2)$ time to run.
\end{lemma}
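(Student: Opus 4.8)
# Proof Proposal for the Running Time of {\tt EntrRed}

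\textbf{Overall approach.} The plan is to account for the cost of Algorithm~\ref{alg:entred} line by line, grouping the work into three phases: (1) finding the highest-probability candidate $c^+$; (2) restricting to the question set $Q_{U'}$ associated with $c^+$; and (3) invoking Subroutine~{\tt QEF} once per question in $Q_{U'}$ and taking the maximum. The claimed bound $O(|Q_c| M^2)$ should emerge from phase (3), which dominates, where $|Q_c|$ is understood as $|Q_{U'}|$ — the number of unknown questions contributing to the score of the selected candidate. I would make this identification explicit at the start, since the lemma as stated uses $|Q_c|$ without re-defining it in the immediate vicinity.

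\textbf{Key steps in order.} First I would observe that the initial loop over candidates (lines 5--11) that computes $c^+$ touches each of the $M$ candidates a constant number of times \emph{given} their winning probabilities; if those probabilities must be (re)computed here, the cost is that of {\tt ProbInd} or {\tt ProbDep} and is subsumed, but for this lemma I would treat the PDF as an input (as line 1 states) so this phase is $\Theta(M)$. Second, forming $Q_{U'}$ (line 12) is a scan over the constructs of $\mathcal{F}(c^+,q)$ and costs $O(|Q_{U'}|)$, again subsumed. Third, and crucially, the loop of lines 14--20 calls {\tt QEF} for each $Q \in Q_{U'}$. A single call to {\tt QEF} (Subroutine~\ref{alg:qef}) consists of a double loop: for each $c \in C_Q$ and each $c' \notin C_Q$ it performs an $O(1)$ absolute-difference computation and accumulation. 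Since $|C_Q| + |C'_Q| \le M$, the product $|C_Q|\cdot|C'_Q|$ is $O(M^2)$, so each {\tt QEF} call is $O(M^2)$. Multiplying by the $|Q_{U'}|$ iterations and adding the comparison bookkeeping gives $O(|Q_{U'}| M^2) = O(|Q_c| M^2)$, which dominates the $\Theta(M)$ from phase (1). I would then conclude by noting the max-selection over questions adds only $O(|Q_{U'}|)$ and does not change the bound.

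\textbf{Main obstacle.} The delicate point is the bookkeeping convention: whether the winning probabilities are treated as precomputed inputs (making the bound purely combinatorial, as the statement suggests) or must be recomputed inside {\tt EntrRed} (in which case one would need to add $\Theta(M^2 m)$ or $\Theta(M^2 m^2)$ from the earlier lemmas). I would resolve this by adopting the former reading, consistent with the \textbf{Input} line of Algorithm~\ref{alg:entred}, and flag that the end-to-end cost per round is the sum of this bound with the relevant probability-computation lemma. A secondary subtlety is confirming that $|C_Q| \cdot |C'_Q| = O(M^2)$ is tight in the worst case (balanced split), so that the bound cannot be sharpened to something like $O(|Q_c| M)$ in general; a one-line worst-case example where $|C_Q| = |C'_Q| = M/2$ suffices. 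Beyond that, the argument is a routine line-by-line cost accounting and I do not anticipate further difficulty.
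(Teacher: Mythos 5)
Your proposal is correct and follows essentially the same argument as the paper: a single call to {\tt QEF} costs $O(M^2)$ because it compares the candidates influenced by $Q$ against those that are not, and {\tt EntrRed} invokes it once per unknown question contributing to the score of the most probable candidate, giving $O(|Q_c|M^2)$. Your added clarifications (treating the winning probabilities as precomputed input and identifying $|Q_c|$ with $|Q_{U'}|$) are consistent with the paper's intent and do not change the approach.
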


\begin{proof}
(Sketch.)
Subroutine {\tt QEF} divides all $M$ candidates in two groups and computes pairwise differences in winning probability  among them. Hence, it takes $(M^2)$ time. {\tt EntrRed} repeats this process on each question influencing the score of the most possible winner candidate and therefore takes $O(|Q_c|M^2)$ time.
\end{proof}

\subsection{Response Processing}\label{subsec:resp}
For a given question $Q$, a discrete response \(r \in [MIN,MAX] \), within the specified range \([MIN, MAX]\), from a single oracle could be a normalized floating point number representing a score value (e.g. 0.7). Upon receiving $r$, the score bounds of each candidate $c$ that contains $Q$ are updated as follows: for the lower bound, substitute MIN  by $r$, and for the upper bound substitute MAX by $r$. 
\section{Experimental Evaluation}\label{sec:exp}
We present the effectiveness of our framework using two key metrics: 
\textbf{M1} Quality of the solution, which assesses the framework's ability to reduce costs (i.e., number of LLM calls) compared to alternative approaches; and 
\textbf{M2} Scalability of each component, which analyzes the time trends as the candidate space grows for each component. 

\subsection{Experimental Setup}\label{sec:exp_setup}

\noindent \textbf{Experiment Settings.} 
Algorithms are implemented in Python 3.11.1, utilizing GPT-4o mini as the oracle/expert in our framework. Experiments are conducted on Wulver (NJIT's HPC cluster using 6 nodes) with a 2.45 GHz AMD EPYC 7753 processor and 512 GB RAM (per node). The code and data are publicly available.\footnote{\it {\it \href{https://github.com/sohrabnamazinia/Personalized-Top-K-Set-Queries}{https://github.com/sohrabnamazinia/Personalized-Top-K-Set-Queries}}} Results are averaged over 10 runs.

\subsubsection{Applications}\label{sec:exp_apps}

We have three use cases: hotels, movies, and Yelp businesses. In all cases, the input comprises a user's query, \( k \), and personalized definitions for each scoring function construct (relevance and diversity). Each LLM prompt requests either the relevance (or diversity) score of an entity (or pair of entities). We use six scoring functions (two per use case) with distinct relevance and diversity definitions as shown in Table \ref{tab:Scoring_functions}.  Prompts are carefully designed to include all necessary unstructured data, i.e., the user's query, relevance and diversity definitions, and unstructured data associated with the related entity (or entities). We use the LangChain framework~\footnote{\url{https://www.langchain.com/}} to ensure that the LLM returns scores in our desired normalized floating-point format. 

\noindent \textbf{1) Top-k Hotels.} 
This is a large dataset \cite{arnab_das_2024} of hotels, containing detailed descriptions of 719,218 hotels sourced from websites, travel agencies, and review platforms. These descriptions serve as the unstructured data associated with hotel entities that we use in our prompts. Two scoring functions with distinct relevance and diversity definitions are used, \(\mathcal{F}_1\) and \(\mathcal{F}_2\) defined in Table \ref{tab:Scoring_functions}.

\noindent \textbf{2) Top-k Movies.} 
This is the Wikipedia Movie Plots\footnote{\it {\it \href{https://www.kaggle.com/datasets/jrobischon/wikipedia-movie-plots}{https://www.kaggle.com/datasets/jrobischon/wikipedia-movie-plots}}}, which includes 33,869 movies with  plot descriptions from Wikipedia pages. These plots form the unstructured data associated with movie entities in our prompts. Two scoring functions with distinct relevance and diversity definitions are used, \(\mathcal{F}_3\) and \(\mathcal{F}_4\) defined in Table \ref{tab:Scoring_functions}.

\noindent \textbf{3) Top-k businesses in Yelp dataset.} This is the Yelp dataset, which contains various attributes for 150,346 businesses (e.g., restaurants), including reviews and images. We used a bundle of reviews and images as the associated unstructured data for each business entity in our prompts. LangChain is used for image processing with LLMs. Two scoring functions with distinct relevance and diversity definitions are used, \(\mathcal{F}_5\) and \(\mathcal{F}_6\) defined in Table \ref{tab:Scoring_functions}.

\subsubsection{Implemented Algorithms}\label{sec:exp_algs}
\ \\
There does not exist any related work that studies the problem that we do. However, we still design baselines that are appropriate.\\
\noindent {\bf ***} \texttt{Baseline.} This approach makes all LLM calls to calculate the exact score for all candidates without maintaining bounds. It then selects the highest scoring candidate.\\
\noindent {\bf ***} \texttt{EntrRed} using {\tt ProbDep.} The original version of our  framework, which does not assume independence among candidates and accounts for their dependencies in the computation.\\
\noindent {\bf ***} {\tt EntrRed} using {\tt ProbInd.} A time-efficient variation of our  framework that assumes independence among candidates, enabling more efficient computation of the probability distribution function (pdf).\\
\noindent {\bf ***} {\tt Random}. This approach maintains score bounds. However, instead of computing a pdf and determining the next question based on that, it selects the next best question randomly.

\subsubsection{Measures}\label{sec:exp_measures}
\ \\ 
\noindent \textbf{Number of LLM Calls.} The primary objective of our framework is to identify the top-k set with the highest score while minimizing the cost of using LLMs as oracles/experts. The number of LLM calls serves as a key indicator of cost. 
We hence compared the total number of LLM calls for {\tt Random} vs. {\tt EntrRed} using {\tt ProbInd}. We also compared {\tt EntrRed} using {\tt ProbDep} vs. {\tt EntrRed} using {\tt ProbInd} to investigate if computing probabilities considering dependence using {\tt EntrRed} using {\tt ProbDep} can result in reducing cost more effectively compared to {\tt EntrRed} using {\tt ProbInd}. 

\noindent \textbf{Recall.} We evaluate recall of the top-$k$ produced by our designed solutions wrt baseline. 

\noindent \textbf{Time Taken for Each Component.} We measured the total time taken by the different algorithms and the time taken for each of the four components of our framework. 

\subsection{Results}\label{sec:results}
The top-$k$ set produced by our proposed algorithms achieve 100\% recall always, as expected.  
\subsubsection{\noindent \textbf{Cost Experiments.}}
Given a fixed number of candidates $n$, we vary \( k \) and report the cost (\# LLM calls) of finding the top-k set. Since Baseline performs all  calls, it is omitted from this experiment. 

\begin{figure*}[!htbp]
    \centering
    \subfigure[Hotels - Scoring function \( \mathcal{F}_1 \)]{
        \includegraphics[width=0.32\textwidth,height=0.18\textheight]{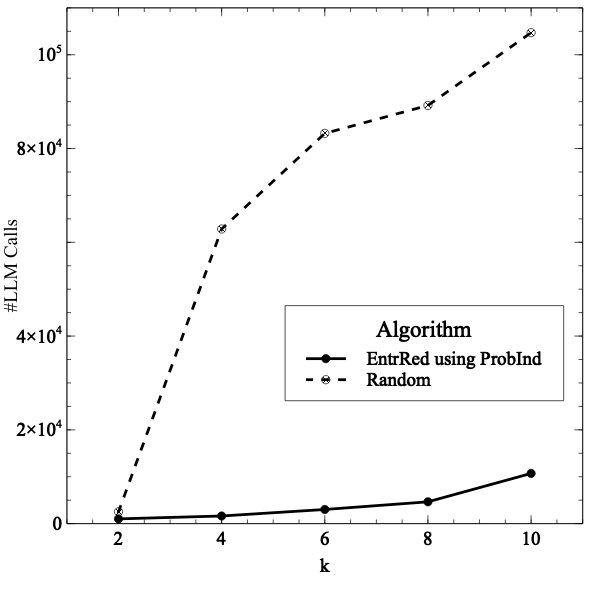}
    }\hfill
    \subfigure[Hotels - Scoring function \( \mathcal{F}_2 \)]{
        \includegraphics[width=0.32\textwidth,height=0.18\textheight]{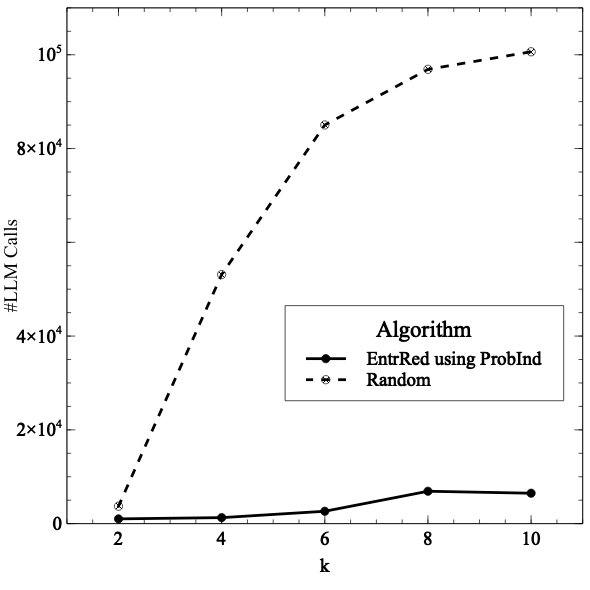}
    }\hfill
    \subfigure[Movies - Scoring function \( \mathcal{F}_3 \)]{\includegraphics[width=0.32\textwidth,height=0.18\textheight]{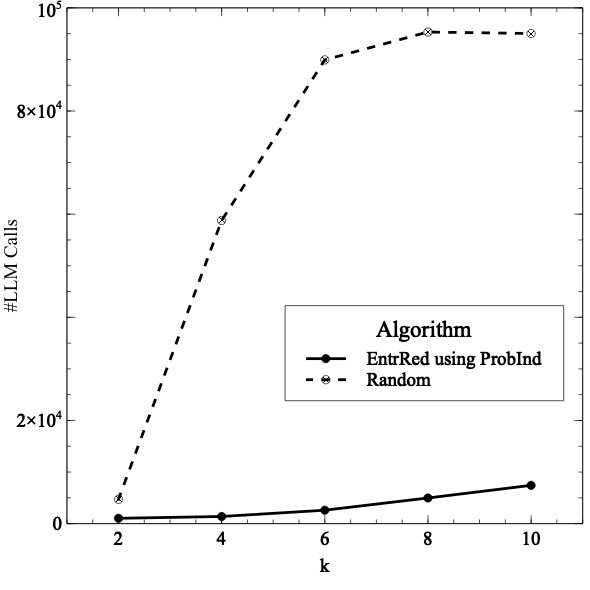}
    }\vfill
    \subfigure[Movies - Scoring function \( \mathcal{F}_4 \)]{
        \includegraphics[width=0.32\textwidth,height=0.18\textheight]{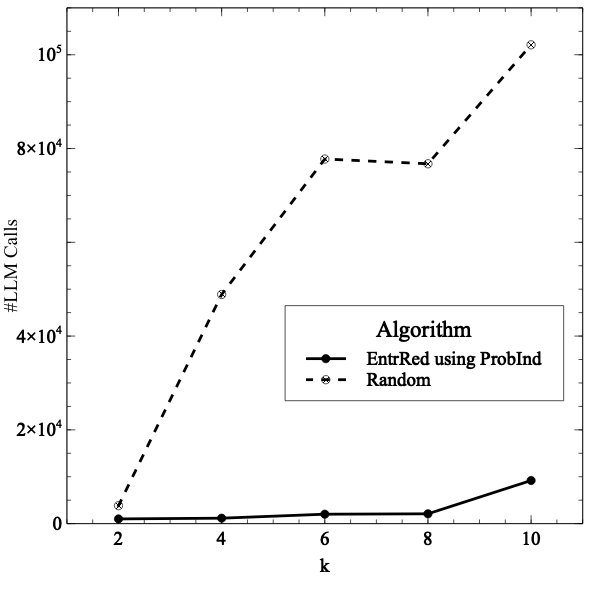}
    }\hfill
    \subfigure[Businesses - Scoring function \( \mathcal{F}_5 \)]{
        \includegraphics[width=0.32\textwidth,height=0.18\textheight]{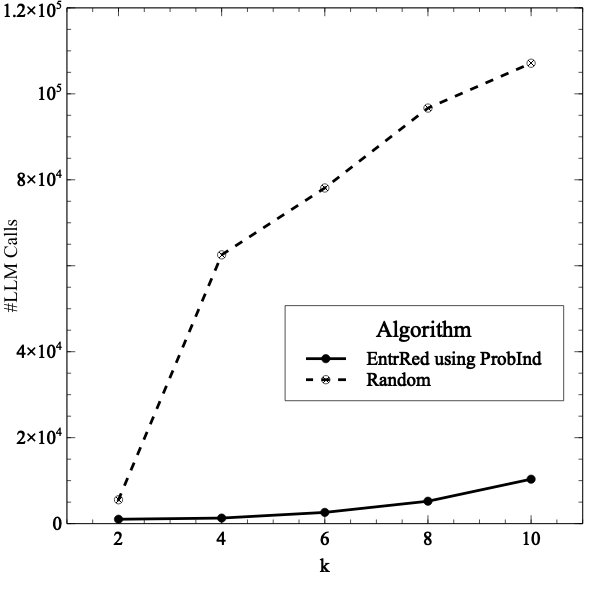}
    }\hfill
    \subfigure[Businesses - Scoring function \( \mathcal{F}_6 \)]{
        \includegraphics[width=0.32\textwidth,height=0.18\textheight]{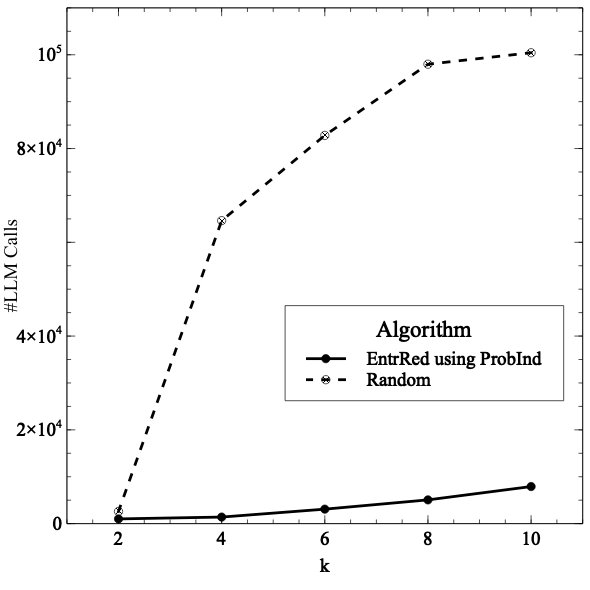}
    }
    \caption{\small \#LLM calls varying \( k \) - {\tt EntrRed} using {\tt ProbInd} vs. {\tt Random}}
    \label{fig:quality_ind_random}
\vspace{-0.1in}
\end{figure*}
1) {\tt EntrRed} using {\tt ProbInd} vs. {\tt Random}:
Figure \ref{fig:quality_ind_random} compares the cost between our  framework with {\tt Random} across the 6 scoring functions defined over the 3 datasets, as detailed in the previous subsection \ref{sec:exp_setup}. These figures demonstrate two key observations:
\begin{itemize}
    \item Our method for determining the next question significantly reduces  the cost (by an order of magnitude) compared to the {\tt Random} algorithm. 
    \item As expected, the cost increases with \( k \) for both the solutions, because a larger \( k \) involves more questions contributing to each candidate's score. However, the cost increases sublinearly for ours with increasing $k$.
\end{itemize}

2) {\tt EntrRed} using {\tt ProbDep} vs. {\tt EntrRed} using {\tt ProbInd}: Here, we focus on comparing the two variants of our  proposed probabilistic models discussed in Section~\ref{winning_probability}. Figure \ref{fig:quality_ind_dep} compares the costs of {\tt EntrRed} using {\tt ProbInd} versus {\tt EntrRed} using {\tt ProbDep}. For brevity, we only present a subset of results that are representative (additional results could be found in our technical report~\cite{tr}).
\begin{itemize}
    \item  {\tt ProbDep} outperforms {\tt ProbInd} in terms of cost for {\tt EntrRed}. This is expected, since unlike in {\tt ProbInd},  {\tt ProbDep} factors in dependence among candidates in calculating the winning probability of each candidate, requiring smaller number of oracle calls in the end.
    \item However, the difference in cost between these two variants is minor, validating that either variant is highly effective for cost reduction compared to {\tt Random}.
\end{itemize}

\subsubsection{\noindent \textbf{Scalability Experiments.}} We present the scalability of our framework by measuring running time of the designed algorithms of each of the four tasks, as well as computing the total running time needed for solving all four tasks. Since {\tt Random} is very poor qualitatively, it is omitted from these experiments.

1) \textbf{Total time taken}: In this exp, we vary the size of the candidate set and measure overall running time of the framework. Figure \ref{fig:scalibility_total_time} shows the total time taken by our framework with {\tt EntrRed} using {\tt ProbDep} vs. {\tt EntrRed} using {\tt ProbInd}. The independence assumption in {\tt EntrRed} using {\tt ProbInd} makes the solution significantly lean in computation time compared to {\tt EntrRed} using {\tt ProbDep} - as expected, the former exhibits an order of magnitude speed up compared to the latter. 

2) \textbf{Computing bounds}: Figure \ref{fig:scalibility_computing_bounds} shows the time for computing bounds in our framework (additional results could be found in our tech report~\cite{tr}). These results corroborate our theoretical analysis - computing bounds is a lightweight task and scales very well. The time takes in negligible compared to the total time.

3) \textbf{Probabilistic model}: This is the most computationally expensive component of the framework. Figure \ref{fig:scalibility_compute_pdf} illustrates (refer to~\cite{tr} for additional results) the time for computing the probabilistic models. Additional results could be found in our technical report~\cite{tr}. Computing probabilities using {\tt ProbDep} is significantly more expensive compared to {\tt ProbInd}, where we assume candidates are independent. This results in a challenge for scalability when the number of candidates grows, whereas {\tt ProbInd} remains highly scalable. Indeed, computing the probabilistic model using {\tt ProbInd} is an order of magnitude faster than its counterpart.

4) \textbf{Determining the next question}: Figure \ref{fig:scalibility_determine_next_q} shows the time for the next best question of {\tt EntrRed} using {\tt ProbInd}. Additional results could be found in our technical report~\cite{tr}. As expected, it is highly efficient and scales well when increasing the number of candidates.

5) \textbf{LLM response}: Figure \ref{fig:scalibility_llm_response} depicts the time taken by the LLM to respond to a question by our proposed framework. {\tt Baseline} makes considerably more LLM calls, leading to a significant increase in processing time. These results reinforce our motivation - calling an LLM is expensive and hence minimizing that cost is important.

\subsection{Summary of Results}
Our experimental evaluation reveals two key findings: 1. As expected, our proposed framework consistently returns the exact top-$k$ results. Additionally, it is highly effective in minimizing the number of LLM calls, achieving a reduction by an order of magnitude compared to baseline methods. The framework demonstrates exceptional generalizability, performing well across various large-scale applications that involve multimodal data. 2. The scalability investigations conducted in this study prove effective. As theoretically analyzed, computing the probabilistic model is the most time-consuming component of the framework. The improvements we suggested in our scalable solution, {\b EntrRed} using {\tt ProbInd}, are both effective and efficient.

\begin{figure*}[!htbp]
    \centering
    \subfigure[Hotels - Scoring function \( \mathcal{F}_1 \)]{
        \includegraphics[width=0.32\textwidth,height=0.18\textheight]{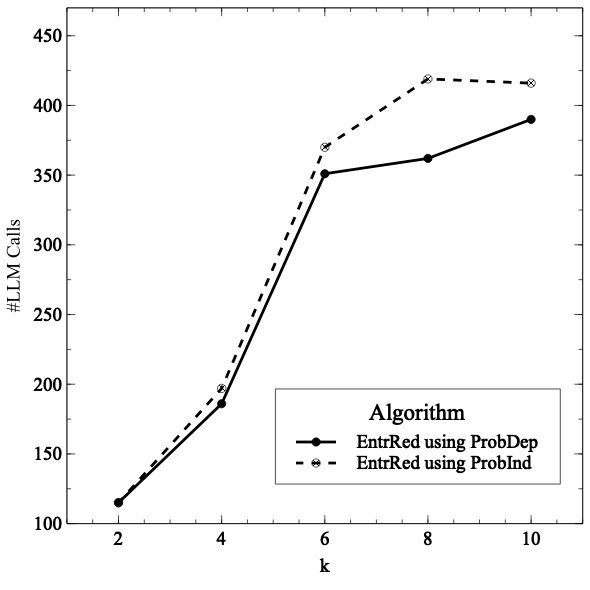}
    }\hfill
    \subfigure[Movies - Scoring function \( \mathcal{F}_3 \)]{
        \includegraphics[width=0.32\textwidth,height=0.18\textheight]{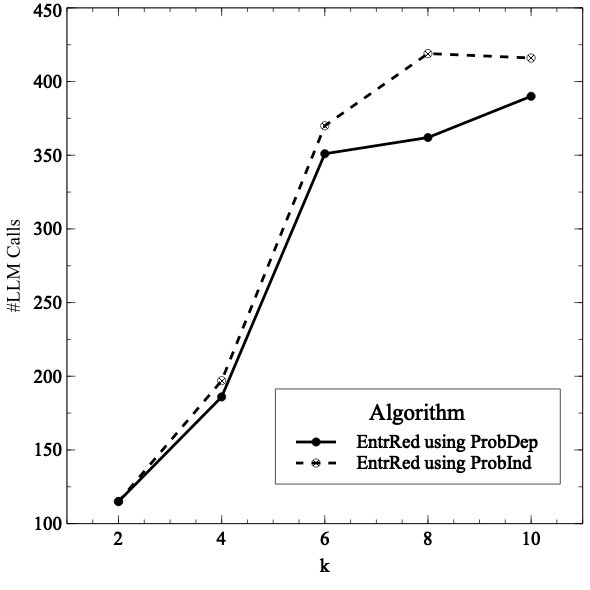}
        }\hfill
    \subfigure[Businesses - Scoring function \( \mathcal{F}_5 \)]{
        \includegraphics[width=0.32\textwidth,height=0.18\textheight]{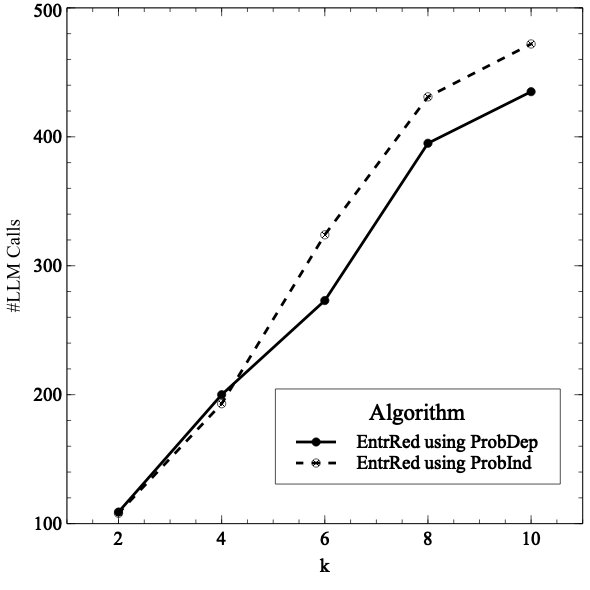}
    }
    \caption{\small \#LLM calls varying \( k \) - {\tt EntrRed} using {\tt ProbDep} vs. {\tt EntrRed} using {\tt ProbInd}}
    \label{fig:quality_ind_dep}
     \vspace{-0.1in}
\end{figure*}

\begin{table}[!htbp]
\centering
\begin{tabular}{|c|c|c|}
\hline
\textbf{Scoring} & \textbf{Relevance} & \textbf{Diversity} \\ 
\textbf{Function} & \textbf{Construct} & \textbf{Construct} \\ \hline
\(\mathcal{F}_1\) & Hotel rating & Physical distance \\ \hline
\(\mathcal{F}_2\) & Proximity to city center & Star rating \\ \hline
\(\mathcal{F}_3\) & Brief plot & Production year \\ \hline
\(\mathcal{F}_4\) & Popularity & Genres \& movie eras \\ \hline
\(\mathcal{F}_5\) & Location near New York & Cost variety \\ \hline
\(\mathcal{F}_6\) & Cuisine type & Varied operating hours \\ \hline
\end{tabular}
\caption{\small Personalized scoring functions used in experiments}\label{tab:Scoring_functions}
\end{table}
\vspace{-0.3in}

\begin{figure*}[!htbp]
    \centering
    \subfigure[Hotels - Scoring function \( \mathcal{F}_1 \)]{
        \includegraphics[width=0.32\textwidth,height=0.18\textheight]{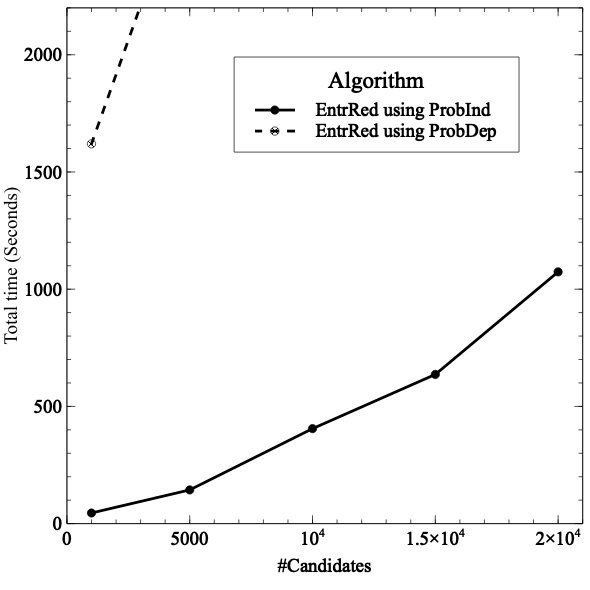}
    }\hfill
    \subfigure[Hotels - Scoring function \( \mathcal{F}_2 \)]{
        \includegraphics[width=0.32\textwidth,height=0.18\textheight]{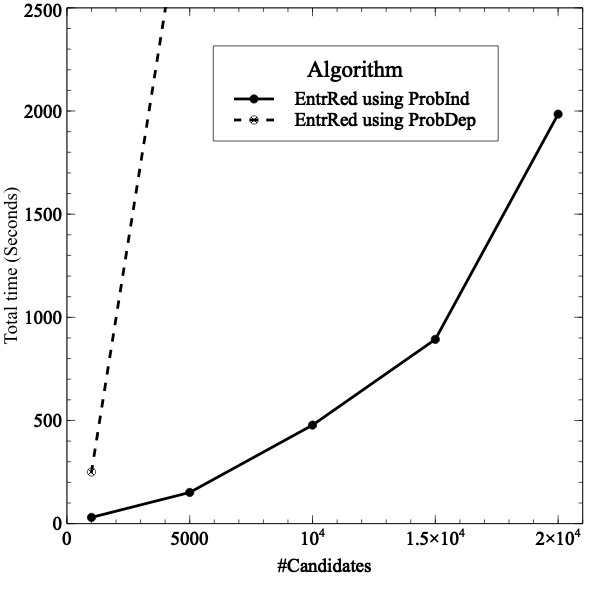}
    }\hfill
    \subfigure[Movies - Scoring function \( \mathcal{F}_3 \)]{
        \includegraphics[width=0.32\textwidth,height=0.18\textheight]{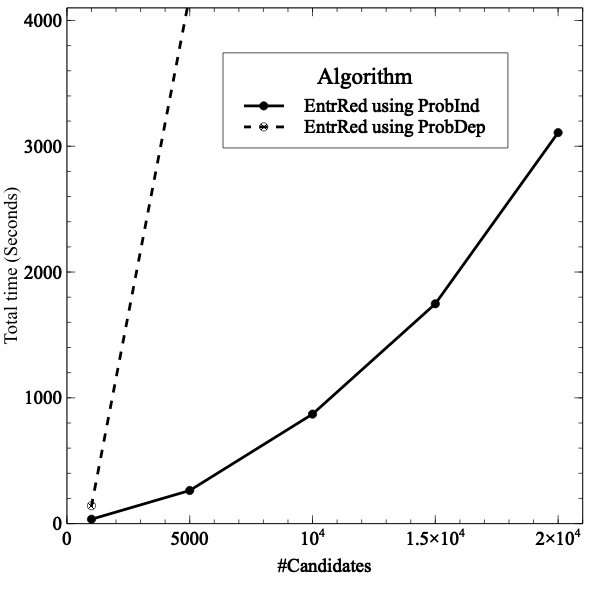}
    }\vfill
    \subfigure[Movies - Scoring function \( \mathcal{F}_4 \)]{
        \includegraphics[width=0.32\textwidth,height=0.18\textheight]{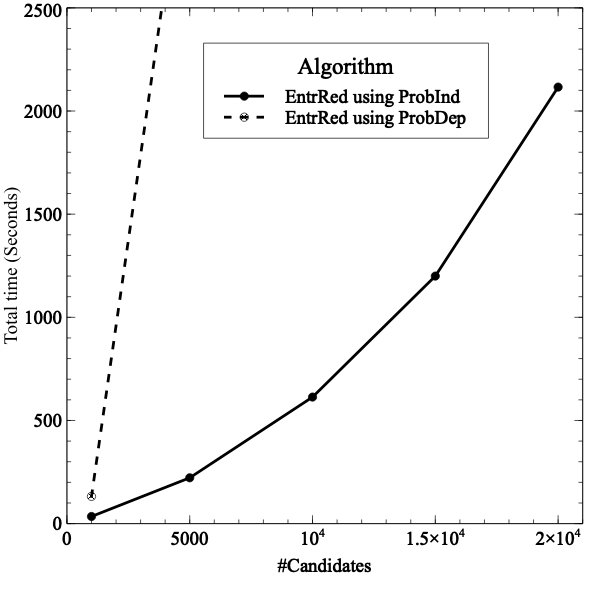}
    }\hfill
    \subfigure[Businesses - Scoring function \( \mathcal{F}_5 \)]{
        \includegraphics[width=0.32\textwidth,height=0.18\textheight]{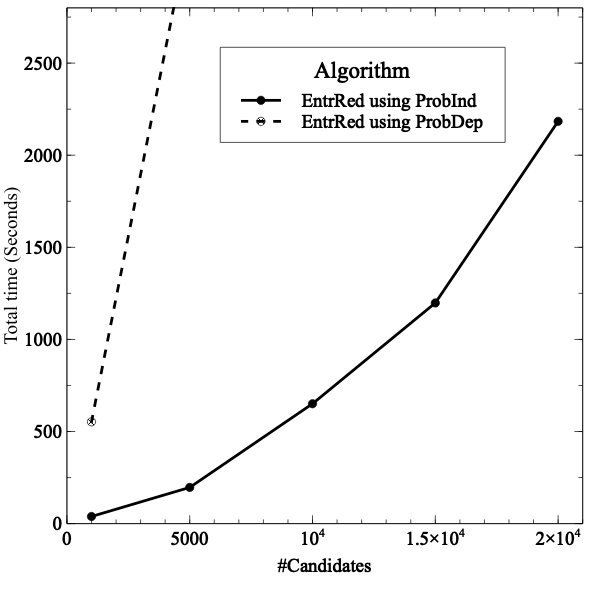}
    }\hfill
    \subfigure[Businesses - Scoring function \( \mathcal{F}_6 \)]{
        \includegraphics[width=0.32\textwidth,height=0.18\textheight]{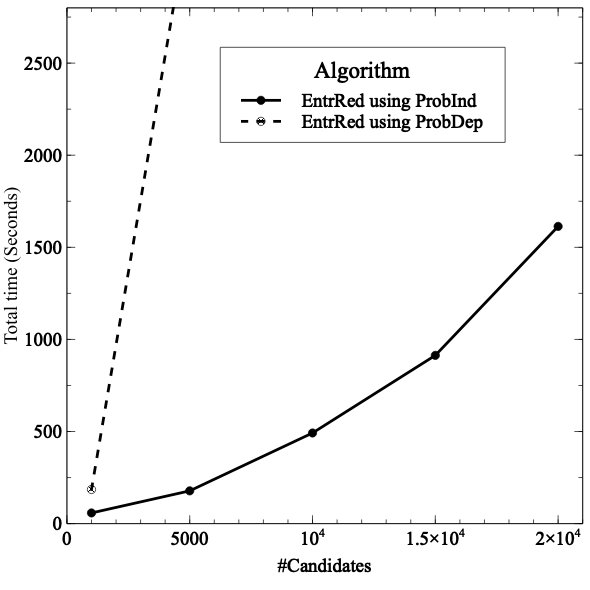}
        }
    \caption{\small Total time taken for {\tt EntrRed} using {\tt ProbDep} vs. {\tt EntrRed} using {\tt ProbInd} varying \#candidates.}
    \label{fig:scalibility_total_time}
    \vspace{-0.1in}
\end{figure*}

\begin{figure*}[!htbp]
    \centering
    \subfigure[Hotels - Scoring function \( \mathcal{F}_1 \)]{
        \includegraphics[width=0.32\textwidth,height=0.18\textheight]{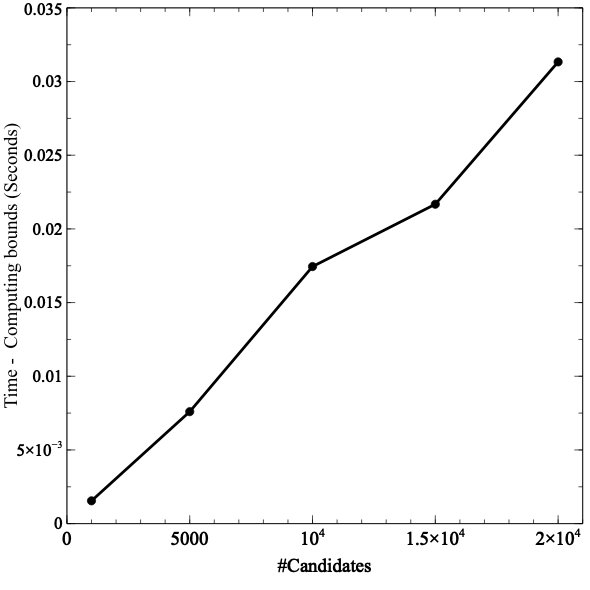}
    }\hfill
    \subfigure[Movies - Scoring function \( \mathcal{F}_3 \)]{
        \includegraphics[width=0.32\textwidth,height=0.18\textheight]{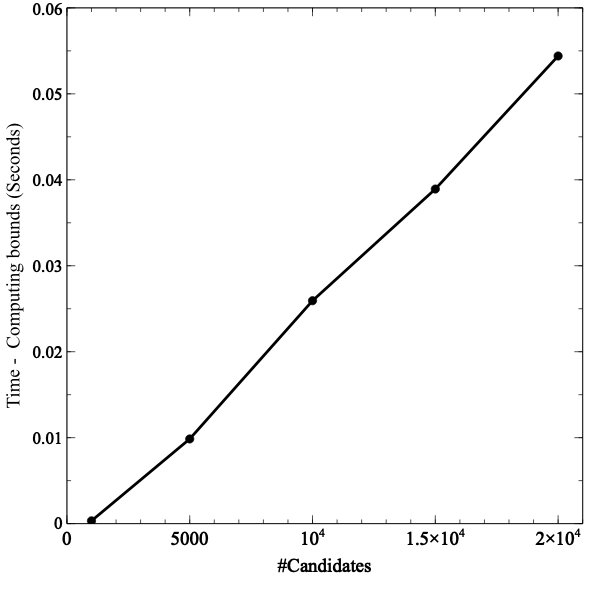}
    }\hfill
    \subfigure[Businesses - Scoring function \( \mathcal{F}_5 \)]{
        \includegraphics[width=0.32\textwidth,height=0.18\textheight]{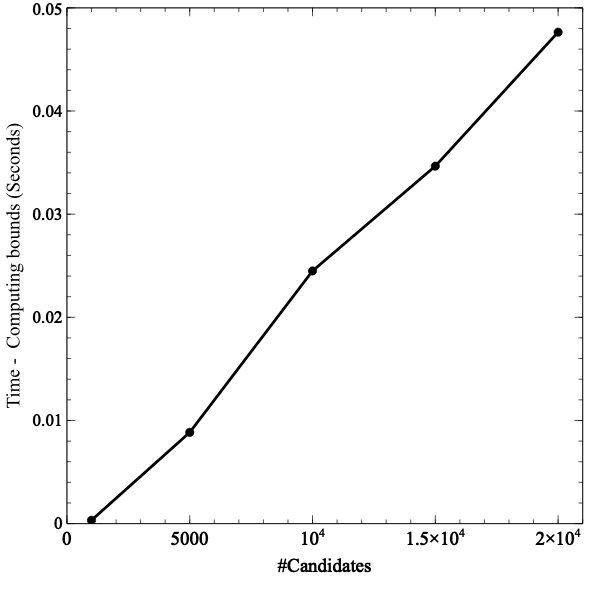}
    }
    \caption{\small Time taken for computing bounds varying \#candidates.}
    \label{fig:scalibility_computing_bounds}
     \vspace{-0.1in}
\end{figure*}

\begin{figure*}[!htbp]
    \centering
    \subfigure[Hotels - Scoring function \(\mathcal{F}_1\)]{
        \includegraphics[width=0.32\textwidth,height=0.18\textheight]{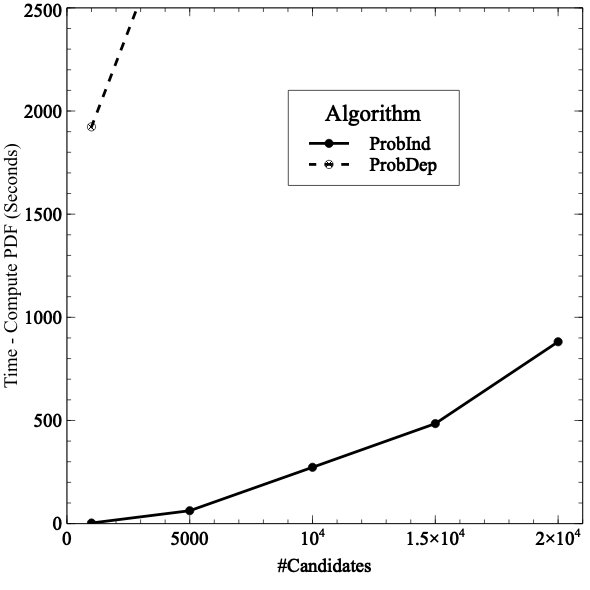}
    }\hfill
    \subfigure[Businesses - Scoring function \(\mathcal{F}_3\)]{
        \includegraphics[width=0.32\textwidth,height=0.18\textheight]{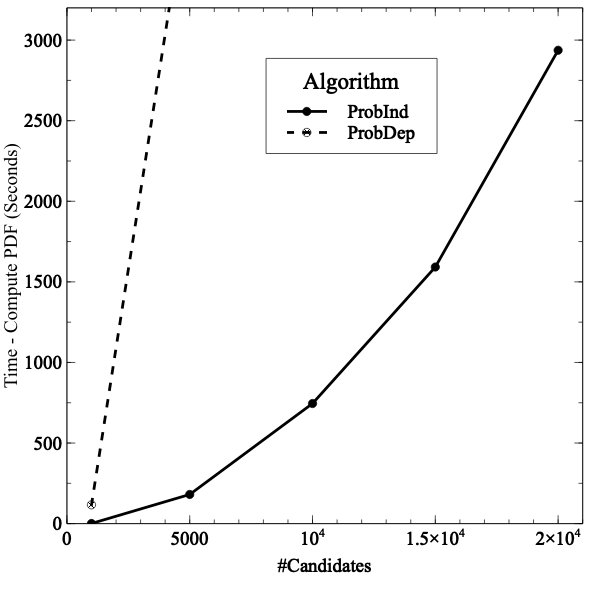}
    }\hfill
    \subfigure[Businesses - Scoring function \(\mathcal{F}_5\)]{
        \includegraphics[width=0.32\textwidth,height=0.18\textheight]{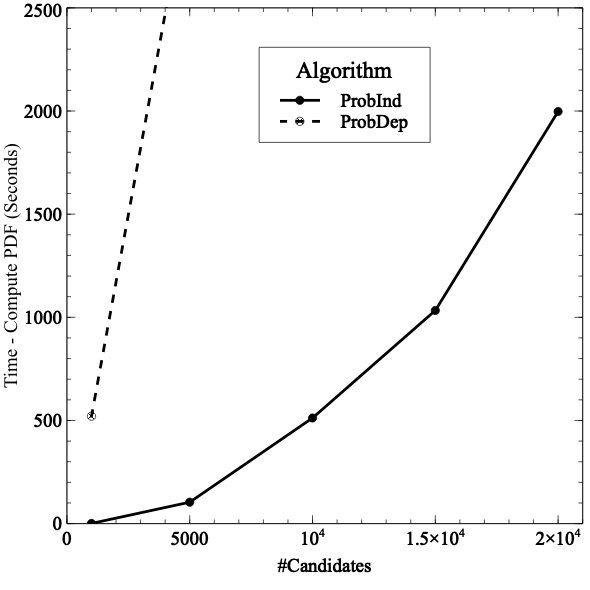}
    }
    \caption{\small Time taken for computing probabilistic model {\tt ProbInd} vs. {\tt ProbDep} varying \#candidates.}
    \label{fig:scalibility_compute_pdf}
    \vspace{-0.1in}
\end{figure*}

\begin{figure*}[!htbp]
    \centering
    \subfigure[Hotels - Scoring function \(\mathcal{F}_1\)]{
        \includegraphics[width=0.32\textwidth,height=0.18\textheight]{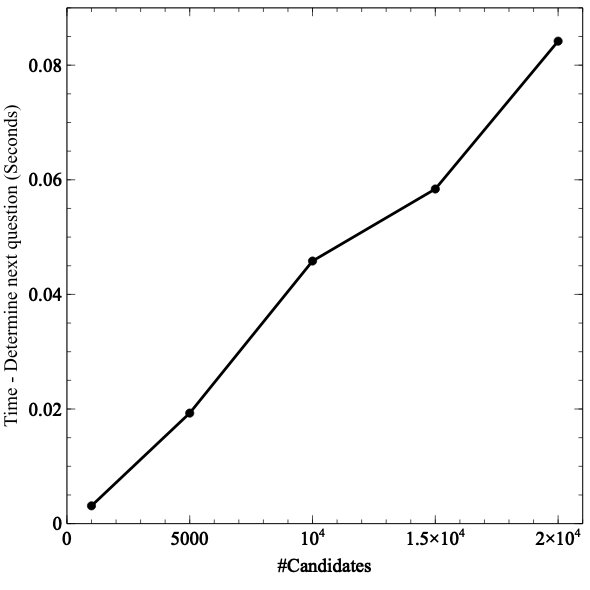}
    }\hfill
    \subfigure[Movies - Scoring function \(\mathcal{F}_3\)]{
        \includegraphics[width=0.32\textwidth,height=0.18\textheight]{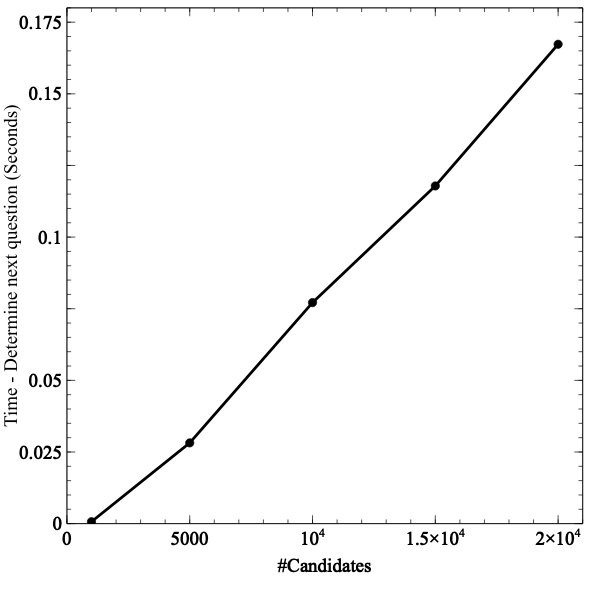}
    }\hfill
    \subfigure[Businesses - Scoring function \(\mathcal{F}_5\)]{
        \includegraphics[width=0.32\textwidth,height=0.18\textheight]{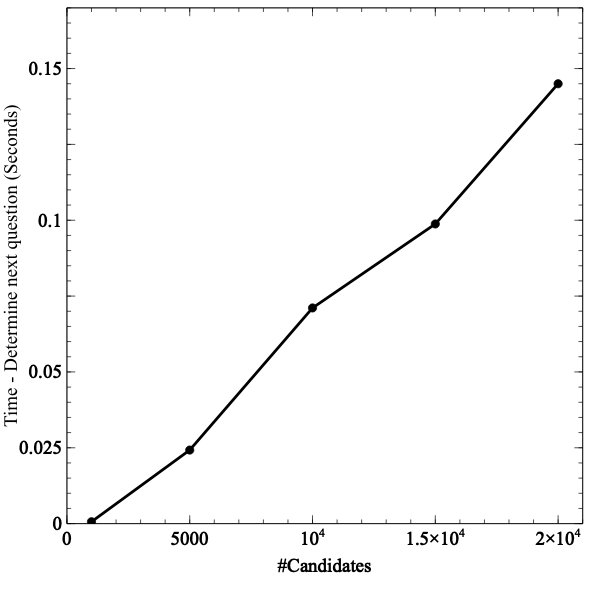}
    }
    \caption{\small Time taken for {\tt EntrRed} using {\tt ProbInd} to determine the next question varying \#candidates.}
    \label{fig:scalibility_determine_next_q}
     \vspace{-0.1in}
\end{figure*}

\begin{figure*}[!htbp]
    \centering
    \subfigure[Hotels - Scoring function \(\mathcal{F}_1\)]{
        \includegraphics[width=0.32\textwidth,height=0.18\textheight]{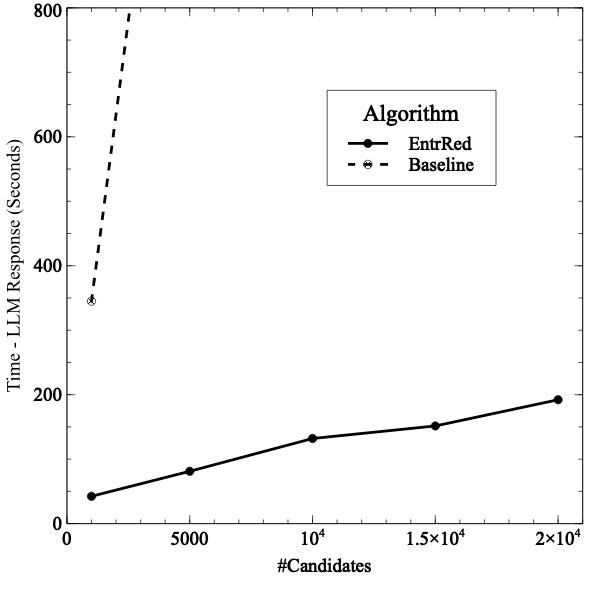}
    }\hfill
    \subfigure[Movies - Scoring function \(\mathcal{F}_4\)]{
        \includegraphics[width=0.32\textwidth,height=0.18\textheight]{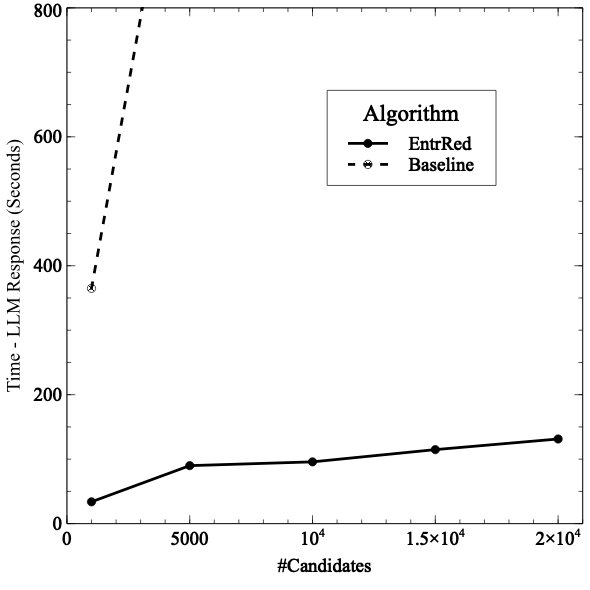}
    }\hfill
    \subfigure[Businesses - Scoring function \(\mathcal{F}_5\)]{
        \includegraphics[width=0.32\textwidth,height=0.18\textheight]{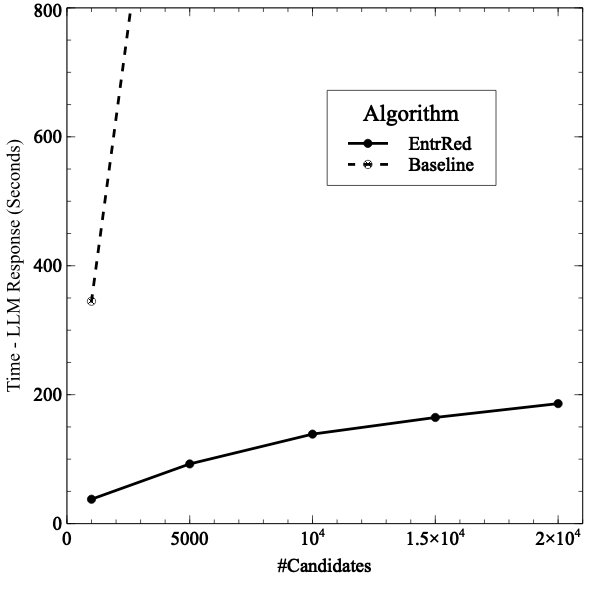}
    }\hfill
    \caption{\small Time taken for LLM response varying \#candidates.}
    \label{fig:scalibility_llm_response}
     \vspace{-0.1in}
\end{figure*}

\section{Related Work}\label{sec:rel}
Our work is closest to reducing the cost of ML inference and to combining queries and ML inference since we aim to answer queries over predicted scores. We believe we are the first to examine the question of top-$k$ under this setting. 
One line of work in query inference, provides native relational support for ML  using containerized solutions such as Amazon Aurora~\cite{sagemaker_2022}, or in-application solutions such as Google's BigQuery ML~\cite{bigquery_2022} and Microsoft's Raven~\cite{DBLP:conf/cidr/KaranasosIPSPPX20}. 

Recent work adopted the use of cheap proxy models, such as image classifiers, to identify  an  approximate  set of $k$ entities satisfying a query~\cite{DBLP:journals/pvldb/DingAL22,DBLP:journals/pvldb/KangGBHZ20}. These works propose sampling approaches to achieve a target precision or recall accuracy. 
Lai et al.~\cite{10.1145/3448016.3452786} studies approximate top-$k$ queries with light-weight proxy models that generate oracle label distribution.
Gao et al.~\cite{DBLP:conf/sigmod/GaoXAY21} introduce a Multi-Level Splitting Sampling to let one "promising" sample path prefix generate multiple "offspring" paths, and direct Monte-Carlo based simulations toward more promising paths.  The ThalamusDB system is an approximate query processing system that integrates into SQL natural language predicates on multimodal data: visual, audio, and text~\cite{DBLP:journals/pacmmod/JoT24}. The system chooses optimized plans according to user preferences on approximation error, computation time, and user labeling overheads. This work is complementary to ours as it does not handle ranked retrieval.
Bolukbasi et al.~\cite{DBLP:conf/icml/BolukbasiWDS17} studied incremental predictions for neural networks.  
 Computation time is reduced by pruning examples that are classified in earlier layers, selected adaptively. 
Kang et al.~\cite{DBLP:journals/pvldb/KangEABZ17} present NOSCOPE, a system for querying videos that can reduce the cost of neural network video analysis by up to three orders of magnitude via inference-optimized model search.
Lu et al.~\cite{DBLP:conf/sigmod/LuCKC18} and Yang et al.~\cite{10.14778/3547305.3547310} use probabilistic predicates to filter data blobs that do not satisfy the query and empirically increase data reduction rates. 
Anderson et al.~\cite{DBLP:conf/icde/AndersonCRW19} use a hierarchical model to reduce the runtime cost of queries over visual content. 
\section{Extensions}\label{sec:ext}
This work opens up several intriguing avenues for future exploration, some of which we discuss below. 

\noindent {\bf Asking Multiple Questions.}
Our approach focused on selecting the next best question. However, an alternative approach would involve determining the next best set of $k'$ questions. This would necessitate adapting the second and third tasks in our framework. We are actively pursuing this extension.

\noindent {\bf Response Processing.}
We assumed a single discrete response from one oracle. However, we recognize that other possibilities exist, such as:
a. An oracle providing a range of values rather than a discrete response,
b. Multiple oracles each offering a discrete response,
c. Multiple oracles each providing a range of responses.

For case (a), this could be treated as a uniform probability distribution, and we could adapt the bound computation task to handle such distributions. The rest of the framework would not require significant changes. Handling (b) and (c) presents additional challenges. As an example, they could be treated as a problem of combining multiple probability distributions, and classical techniques, such as, convolution of probability distributions~\cite{olds1952note} could be used. Alternatively, machine learning methods such as Reinforcement Learning with Human Feedback (RLHF)~\cite{rlhf1, rlhf2, rlhf3, rlhf4, rlhf5} could also be applied. We plan to explore these possibilities in future work.

\noindent {\bf Multiple Noisy Oracles.}
An interesting extension involves the scenario where multiple noisy oracles are present, each with different costs associated with querying. This would require a substantial modification of our framework. Instead of minimizing the number of questions asked, the focus would shift to minimizing the total cost of querying while still maximizing the likelihood of correctly identifying the top-$k$ entities.

\noindent {\bf Rank-based Querying.}
Our framework is designed to identify the top-$k$ entities based on a user-defined scoring function. A natural next step is to extend this to determine the top-$k$ entities in a ranked order. Adapting our framework to handle ranked queries would involve modifications to the first three tasks. We hope to explore this extension as part of our future work as well.

\section{Conclusion}\label{sec:conc}
We presented the novel problem of computing personalized top-$k$ sets with user-defined set-based scoring functions when the function constructs are computed with expensive oracle calls. This problem is ubiquitous in applications where scoring requires access to multimodal entities such as pictures and audio. Computing scores of those entities requires calling oracles such as LLMs. We believe our work to be the first to examine the question of reducing the computation cost of score constructs for set-based scoring functions when invoking expert oracles.

Our framework aims to reduce the number of oracle calls necessary to compute the exact top-$k$ set. It relies on four tasks that compute score bounds of candidate sets and a probabilistic model that takes into account interdependency between those sets. The framework determines the next best question to ask an LLM to complete scores and handles score uncertainty obtained from LLMs. We make principled contributions in designing solution for each of these tasks.

Our extensive experiments on a large corpora of entities representing hotels and their modalities, demonstrate that our framework is able to compute exact top-$k$ results by invoking the minimum number of calls necessary to expensive oracles. This work opens many new directions that we discussed in Section~\ref{sec:ext}.


\bibliographystyle{ACM-Reference-Format}
\bibliography{sample}

\end{document}